\theoremstyle{plain}
\newtheorem{corollary}{Corollary}
\newtheorem{proposition}{Proposition}
\newtheorem{theorem}{Theorem}
\newtheorem{lemma}{Lemma}
\theoremstyle{remark}
\newtheorem{example}{Example}
\newtheorem{remark}{Remark}
\newcommand{\Prob}{\mathbb{P}}
\newcommand{\E}{\mathbb{E}}
\newcommand{\cl}[1]{\mathcal{#1}}
\renewcommand{\sf}[1]{\textsf{#1}}
\newcommand{\set}[1]{\left\lbrace #1 \right\rbrace}
\DeclarePairedDelimiter\floor{\lfloor}{\rfloor}
\DeclareMathOperator*{\argmax}{argmax}
\DeclareMathOperator*{\Var}{Var}
\title{Optimizing Returns from Experimentation Programs}
\author[1]{Timothy Sudijono\footnote{Work done while at Netflix. Email: \url{tsudijon@stanford.edu}}}
\author[2]{Simon Ejdemyr}
\author[2]{Apoorva Lal}
\author[2]{Martin Tingley}
\affil[1]{Department of Statistics, Stanford University}
\affil[2]{Netflix}
\date{\today}
\begin{document}
\maketitle
\begin{abstract}
Experimentation in online digital platforms is used to inform decision making. Specifically, the goal of many experiments is to optimize a metric of interest. Null hypothesis statistical testing can be ill-suited to this task, as it is indifferent to the magnitude of effect sizes and opportunity costs. Given access to a pool of related past experiments, we discuss how experimentation practice should change when the goal is optimization. We survey the literature on empirical Bayes analyses of A/B test portfolios, and single out the \textit{A/B Testing Problem} \cite{azevedo2020b} as a starting point, which treats experimentation as a constrained optimization problem. We show that the framework can be solved with dynamic programming and implemented by appropriately tuning $p$-value thresholds. Furthermore, we develop several extensions of the A/B testing problem and discuss the implications of these results on experimentation programs in industry. For example, under no-cost assumptions, firms should be testing many more ideas, reducing test allocation sizes, and relaxing $p$-value thresholds away from $p = 0.05$. 
 %, finding that current practice is suboptimal for optimizing returns. 
\end{abstract}

%%%%%%%%%%%%%%%%%%%%%%%%%%%%%%%%%%%%%%%%%%%%%%
%%%% Main text entry area:

\section{Introduction}

Experimentation in the private sector -- particularly in the online tech industry --  broadly differs from experimentation in scientific fields in two ways. Firstly, the goal of the vast majority of experiments in the private sector is to \textit{optimize} a metric. For example, tech firms optimize  measurable business objectives such as long-term retention, sign-up rates, or revenue. Secondly, repositories of past experiments give firms strong priors on treatment effects from related experiments. Empirical evidence \cite{azevedo2020b, goldberg2017decision, guo2020empirical, georgiev_1001, georgiev2022analysis, kohavi2024false} suggests the following stylized fact regarding experiments: treatment effect distributions are typically centered around zero, with significant dispersion around the mean. 
Figure \ref{fig:prior_densities} shows treatment effect densities for three distinct experimentation initiatives at Netflix, corroborating this point.

Given this prior on treatment effects, how should an organization conduct experimentation when the goal is to optimize a metric of interest? The current statistical paradigm in industry, null hypothesis statistical testing, can be mismatched with this goal: it is indifferent to utilities and opportunity costs, and it provides no principled justification for Type I and Type II error defaults \cite{manski2019treatment}. Building on a growing literature \cite{azevedo2020b,azevedo2023b, guo2020empirical, goldberg2017decision, deng2015objective}, we study an alternative Bayesian decision-theoretic framework and discuss how organizations should restructure current practice towards optimizing returns from experimentation.

\begin{figure}
    \centering
    \includegraphics[width=0.6\linewidth]{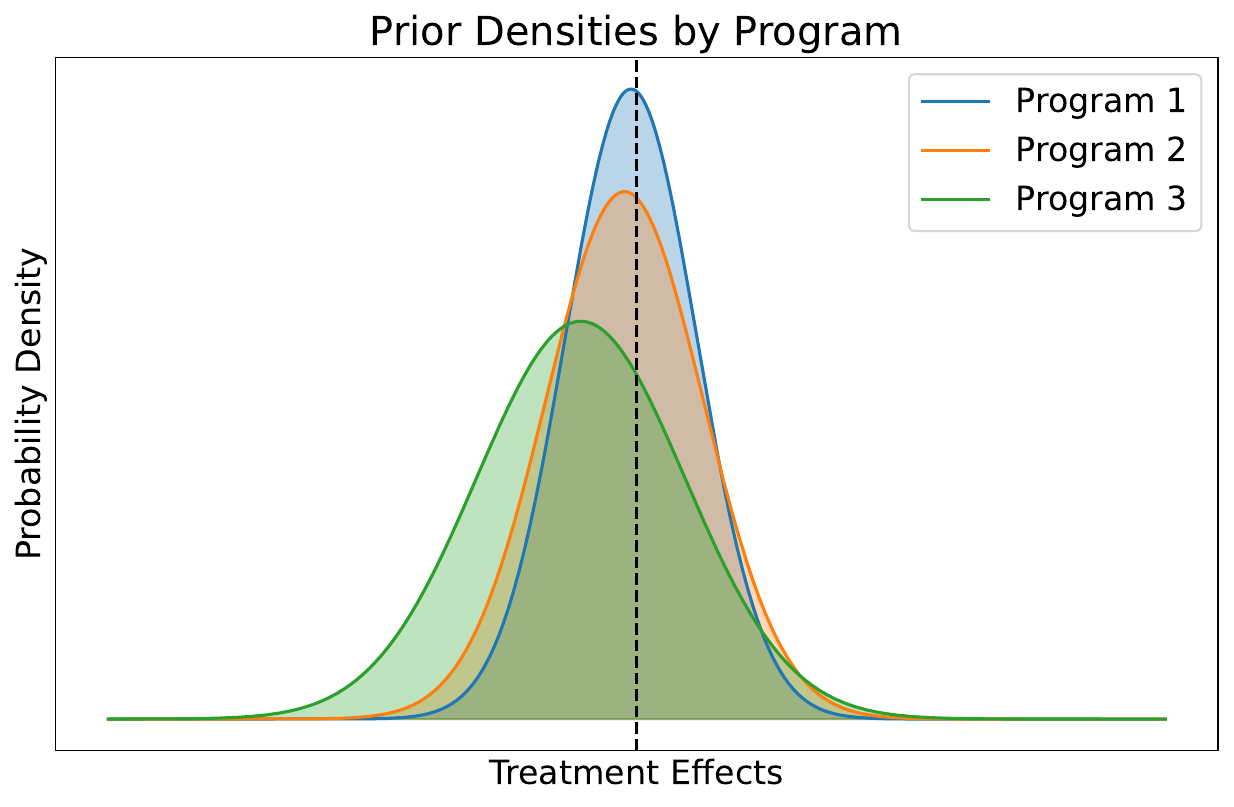}
    \caption{Best-fitting Gaussian densities for three programs at Netflix. $x$-axis is obfuscated. Dashed line indicates $x = 0$.} 
    \label{fig:prior_densities}
\end{figure}

\subsection{Setup}
\label{sec:setup}

An organization conducting experimentation typically has a pool of untested ideas and limited resources with which to conduct experiments. These organizations have a population of units on which to experiment -- an \textit{allocation pool} -- and teams to generate new ideas for experiments and subsequently analyze them. Generally, experimentation in these firms proceeds as follows. A product team generates an idea which may improve a product. The idea is tested using a randomized controlled experiment or A/B test \cite{kohavi2009online}. Based on the results of the A/B test, the idea is either discarded or \textit{shipped} -- made available to all units in the allocation pool. Often, similar tests on the same product, platform, or population can be grouped into initiatives that we will call \textit{experimentation programs}.

The \textit{A/B Testing Problem} \cite{azevedo2020b} provides a starting point for optimizing the returns from a single experimentation program. It is defined as follows. Suppose the experimentation program consists of $I$ untested ideas or interventions indexed by $i=1,\dots,I$. The data is not assumed to arrive sequentially. Treatment effects for these ideas are measured in the same metric. We will think of these effects $\Delta_i$ as the \textit{return} associated with idea $i$. Data from past experiments give credence to a Bayesian assumption that $\Delta_i$ are drawn iid for some distribution $G$. We choose an \textit{allocation} of units to A/B tests, $(n_i)_{i=1}^I$, such that $\sum_i n_i = N$, the total size of the allocation pool. A unit may only be assigned to one test at a time, making the allocation pool a limited resource. 

For those tests with $n_i > 0$, we observe the result of the experiment, given by
\[
\hat{\Delta}_i \sim \textsf{N}\left(\Delta_i, \frac{\sigma^2}{n_i} \right).
\]
Normality of the observations is justified by the central limit theorem; $\sigma$ represents typical fluctuations of the return metric, which is fairly stable and known in advance. The goal of the A/B testing problem is to select an allocation $(n_i)_{i=1}^I$ and a subset of tests $S$ to ship, $S \subseteq [I]$, in order to maximize total expected returns to experimentation,
\begin{equation}
\label{eq:ab_testing_problem_obj}
\mathbb{E}\left[\sum_{i \in S} u(\Delta_i) \right],
\end{equation}
where $u$ is an increasing utility function. We will typically take $u(x) = x$ to be risk-neutral. However $u$ may also capture risk-averse utilities such as $u(x) = x + bx\mathbf{1}\set{x < 0}, b > 0$, inspired by prospect theory \cite{kahneman2013prospect}.

\subsection{Related work}

The idea of optimizing returns from a portfolio of A/B tests is a stark alternative to traditional statistical analysis of experiments. The decision theoretic foundations of this perspective were first discussed in \cite{azevedo2020b, azevedo2023b, goldberg2017decision, manski2019treatment, guo2020empirical}. These works also adopt a Bayesian or empirical Bayesian approach to the problem. See Appendix \ref{sec:litreview} for an extended discussion of closely related work in the analysis of collections of A/B tests.

Most relevant to our work is the A/B Testing Problem in \cite{azevedo2020b}, which is the first to recast experimentation as a constrained optimization problem -- exactly the A/B Testing Problem in the previous section. By a careful asymptotic analysis, \cite{azevedo2020b} characterizes the optimal experimentation strategy in specific settings. For $t$-distributed priors which are sufficiently heavy-tailed, the optimal allocation is to split $N$ units equally across the $I$ tests, a strategy called \textit{lean experimentation}: firms should run many low-powered tests. For light-tailed $t$-distributed priors $G$ with $N$ small enough, the optimal strategy is to \textit{go big}: it is optimal to allocate all $N$ units to one test. 

Collectively, the aforementioned literature \cite{azevedo2020b, azevedo2023b, goldberg2017decision, manski2019treatment, guo2020empirical} and this paper represent a different approach to experimentation that we will refer to as the \textbf{return-aware framework}, to contrast it with null hypothesis testing. The framework views experimentation as a decision theoretic problem where the goal is to maximize the cumulative impact of experimentation, subject to various constraints.

\subsection{Contributions.} 

%The primary purpose of this article is to unify the aforementioned literature and distill the idea that experimentation in industry is better served by a return-maximizing framework. 
This article makes original contributions to the return-aware framework and discusses its implications for current practice. We also argue that the return-aware framework is well-suited to experimentation in industry, and worthy of future research. First, we highlight several simple, useful observations about the framework.

\paragraph*{Practical Solution Methods.} The elegant results of \cite{azevedo2020b,azevedo2023b} leave open the optimal allocation for general priors and arbitrary values of $I,N$. We show that dynamic programming solves this problem for any choice of $I,N$, utility function $u$, and prior $G$.
    
\paragraph*{Connection to $p$-value Frameworks.} In current experimentation practice, A/B tests are usually decided based on $p$-values and not posterior expectation calculations. We show that tuning the hypothesis testing level $\alpha$ based on the prior $G$ can replicate the optimal solution of Eq. \eqref{eq:ab_testing_problem_obj}. As a result, there is little engineering work required to adopt the return-maximizing framework, and the implied optimal $p$-value gives insight into how current practice should be changed.

\paragraph*{Analytic Approximations to the Optimal Solution.} We give a simple approximation to optimal returns from an experimentation program using the optimal allocation and decision rule. The approximation shows how program-level returns depends on the size of the allocation pool, the number of ideas available for testing, and properties of the prior distribution.

\paragraph*{Implications for Current Experimentation Practice.} 
Second, we apply the previous three insights to experimentation programs at Netflix, a large online technology firm. Under the assumption of no costs, they suggest that current practice, justified by null hypothesis statistical testing, is highly suboptimal for the goal of optimizing returns. The optimal allocation of the A/B testing problem suggests many more tests should be run in practice. Moreover, the implied optimal one-sided $p$-values are found to be markedly larger than $0.05$. Incorporating costs gives a more nuanced picture that we discuss in Section \ref{sec:costs}. Given the similarities in experimentation practice across technology firms, these results are likely applicable to other firms as well.
    
\paragraph*{Managing Multiple Experimentation Programs and Further Variants.} Thirdly, we discuss various extensions to the base A/B testing problem framework to managing multiple experimentation programs with shared resources. The extensions can be solved with dynamic programming and provide useful tools for managers to make investment decisions across several initiatives. \\

Finally, we end with important caveats on the return-aware framework and several directions for future research.

\section{Insights on the A/B Testing Problem}
\label{sec:ab_testing_problems}

The basic solution to the A/B Testing Problem is given in \cite{azevedo2020b}, which discusses both the optimal decision rule and resulting optimization problem describing the optimal allocation. We will focus exclusively on the case where $\E_G[\Delta] \leq 0$, which is a realistic assumption across many mature experimentation programs where most low hanging fruit has been picked \cite{azevedo2020b}. 

\begin{lemma}[Thm. 1 of \cite{azevedo2020b}]
\label{lemma:fat_tails}
\sloppy Assume that $(n_i)_{i=I_0+1}^I = 0$, and suppose the subset $S$ is chosen on the basis of the observed $(\hat{\Delta}_i)_{i=0}^{I_0}$. Given any allocation strategy, the optimal subset $S$ is comprised of $i \leq I_0$ such that the posterior expectation $\E[u(\Delta_i) | \hat{\Delta}_i]$ is positive. %, and $i \geq I_0 + 1$ with $\E[u(\Delta_i)] \geq 0$. Under this decision framework, ideas in programs with a positive mean prior should be shipped without being tested. This is unpalatable to most practioners.
\end{lemma}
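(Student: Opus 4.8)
The plan is to treat this as a standard Bayes-optimal decision problem and reduce the choice of $S$ to a pointwise optimization carried out separately for each realization of the observed data. Because the shipping set $S$ is allowed to depend on the observations $(\hat\Delta_i)_{i=1}^{I_0}$, the natural tool is the law of iterated expectations: conditioning on the full observation vector converts the global maximization over data-dependent decision rules into an independent maximization at each data point, where the decision reduces to comparing a single number against zero.

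First I would condition on the observations and use linearity of expectation. Writing $\mathcal{D} = (\hat\Delta_i)_{i=1}^{I_0}$ for the observed data, the tower property gives
\[
\E\left[\sum_{i\in S} u(\Delta_i)\right] = \E\left[\,\E\!\left[\sum_{i\in S} u(\Delta_i)\,\middle|\, \mathcal{D}\right]\right] = \E\left[\sum_{i\in S}\E\!\left[u(\Delta_i)\mid \mathcal{D}\right]\right],
\]
where the last equality uses that $S$ is a measurable function of $\mathcal{D}$, hence fixed once we condition on $\mathcal{D}$. Next I would invoke conditional independence: because the $\Delta_i$ are drawn iid from $G$ and the noises in $\hat\Delta_i \mid \Delta_i$ are independent across $i$, the posterior law of $\Delta_i$ given the entire vector $\mathcal{D}$ depends only on its own observation, so $\E[u(\Delta_i)\mid\mathcal{D}] = \E[u(\Delta_i)\mid \hat\Delta_i]$ for each tested idea $i\le I_0$.

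With this in hand the objective becomes $\E\big[\sum_{i\in S} c_i\big]$, where $c_i := \E[u(\Delta_i)\mid\hat\Delta_i]$ is a fixed number once $\mathcal{D}$ is observed. For each realization of $\mathcal{D}$ the inner sum $\sum_{i\in S} c_i$ is maximized over subsets $S$ by including exactly those indices with $c_i > 0$; since this rule is itself a measurable function of $\mathcal{D}$ it is an admissible choice of $S$, and because it maximizes the integrand pointwise it also maximizes the outer expectation. This yields the claimed rule $S = \set{i \le I_0 : \E[u(\Delta_i)\mid\hat\Delta_i] > 0}$.

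Finally I would address why no untested idea ($i > I_0$, with $n_i = 0$) is ever shipped. For such an idea there is no observation, so its contribution is the prior quantity $\E_G[u(\Delta)]$; under the maintained assumption $\E_G[\Delta] \le 0$ together with $u(x)=x$ (or, more generally, any increasing $u$ with $\E_G[u(\Delta)] \le 0$) this contribution is nonpositive, so excluding all untested ideas is optimal and $S \subseteq \set{1,\dots,I_0}$. The only genuinely delicate points are the conditional-independence reduction $\E[u(\Delta_i)\mid\mathcal{D}] = \E[u(\Delta_i)\mid\hat\Delta_i]$ and the measurability justification that licenses swapping the global optimization for the pointwise one; both are routine but worth stating carefully, and the untested-idea exclusion is precisely where the sign assumption on $\E_G[\Delta]$ enters.
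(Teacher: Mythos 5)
Your proof is correct. Note that the paper itself contains no proof of this lemma: it is imported as Theorem 1 of \cite{azevedo2020b}, and no argument for it appears in the appendix, so there is no in-paper proof to compare against. Your argument is the standard decision-theoretic one that the cited result rests on, and each step is sound: the tower property plus measurability of $S$ with respect to the data, the collapse $\E[u(\Delta_i)\mid\mathcal{D}]=\E[u(\Delta_i)\mid\hat\Delta_i]$ from independence of the pairs $(\Delta_i,\hat\Delta_i)$ across $i$, and the pointwise maximization of $\sum_{i\in S}c_i$ by taking $S=\set{i\leq I_0 : c_i>0}$, which is itself an admissible (measurable) rule.

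One point in your write-up is worth emphasizing because it is more careful than the lemma statement itself: the exclusion of untested ideas. For linear utility, the maintained assumption $\E_G[\Delta]\leq 0$ does the job, and it also suffices for the paper's risk-averse family $u(x)=x+bx\mathbf{1}\set{x<0}$, $b>0$, since there $\E_G[u(\Delta)]\leq\E_G[\Delta]\leq 0$. But for an arbitrary increasing $u$ with $u(0)=0$, the hypothesis $\E_G[\Delta]\leq 0$ does \emph{not} imply $\E_G[u(\Delta)]\leq 0$ (take $G$ with a small mass far in the negative tail and $u$ bounded below), so the untested ideas can have strictly positive prior expected utility and would then belong in the optimal $S$. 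Your parenthetical requiring $\E_G[u(\Delta)]\leq 0$ for general increasing $u$ is exactly the right fix, and that caveat should accompany any use of the lemma with nonlinear utility.
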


\sloppy Given the optimal decision, the allocation problem then reduces to maximizing 
\[
\sum_{i=1}^I \mathbb{E}\left[\mathbb{E}\left[ u(\Delta_i) | \hat{\Delta}_i;n_i \right]^+ \right]
\]
subject to a simplex constraint. The summand 
\[
f(n) := \mathbb{E}\left[\mathbb{E}\left[ u(\Delta) | \hat{\Delta};n \right]^+ \right]
\]
as a function of $n$ is called the \textit{production function} \cite{azevedo2020b}, which represents the expected return from testing a typical idea with $n$ units, using the optimal decision. When $n = 0$, we set $f(n) = 0.$ The production function can be computed with knowledge of the treatment effect distribution $G$ and $\sigma$, both of which we assume are known. In practice, $G$ can be estimated using data on past experiments, using methods described in Sec. \ref{sec:litreview}; see also Section \ref{sec:discussion}. \\

The allocation problem thus reduces to
\begin{align*}
    & \text{maximize}_{(n_i)_{i=1}^{I}} \sum_{i=1}^I f(n_i)\\
    & \text{subject to } n_i \geq 0, \sum_{i=1}^{I} n_i = N.
\end{align*}
We show that the problem can be solved in general with simple dynamic programming. Denote by $F(I,N)$ the maximum such value.  Notice that 
\begin{equation}
\label{eq:DP_eq}
F(I,N) = \max_{j = 0}^N \left( F(I-1,N-j) + f(j)\right), 
\end{equation}
which suggests an efficient approach to calculate $F(I,N)$ by tabulating values $F(i,n)$ for all $i=1,\dots,I, n = 1,\dots,N$. The base cases are simple; when $I = 1$, we just return the value $f(N)$. When $N = 1$, we return the value $f(1) + (I-1)f(0).$ The dynamic programming algorithm takes $O(N^2 I)$ time after the production function is computed.  Computing the optimal solution can be done by backtracking, keeping track of the index $j$ which maximizes $F(I,N)$ in equation \eqref{eq:DP_eq}. If $N$ is prohibitively large, we can consider a minimum experiment cohort size $c_0$ and instead use the update
\begin{equation}
\label{eq:DP_eq_discretized}
F(I,N) = \max_{j = 0}^{\floor{N/c_0}} \left( F(I-1,N-jc_0) + f(jc_0)\right),    
\end{equation}
which requires $O((N/c_0)^2 I)$ time to solve. These methods can be used to solve many related extensions of the return-maximizing framework, as we show throughout the paper. Code implementing the dynamic programming methods may be found on \href{https://github.com/tsudijon/OptimizeExperimentPrograms}{Github}. 
The repository contains notebooks to replicate all figures in the paper and also an example dataset of treatment effects from a Netflix experimentation program, scaled by a random factor to preserve confidentiality.

\subsection{Connection to $p$-value Decision Frameworks}
\label{sec:pvalue_framework}

At many online technology firms, the traditional framework of $p$-value significance is used; tests are rejected at the $\alpha = 0.05$ level using covariated-adjusted causal estimators, with sample size allocated by power calculations. One-sided $p$-values for the null hypothesis $H_0: \Delta_i \leq 0$ are used to determine whether idea $i$ should be shipped. In the model described in Section \ref{sec:setup}, the one-sided $p$-value is given by 
\[
p_i := 1-\Phi(\widehat{\Delta}_i\sqrt{n_i}/\sigma).
\]
Perhaps surprisingly, decisions based on this one-sided $p$-value are equivalent to the optimal decisions in the A/B Testing Problem based on the posterior expectation.

\begin{proposition}
\label{prop:pvalue_framework_equivalence}
There exists an $\alpha$ threshold such that 
\begin{equation}
p_i \leq \alpha \Leftrightarrow \E_G[u(\Delta_i) | \widehat{\Delta}_i] > 0,
\end{equation}
for any prior $G$ and increasing utility function $u.$
\end{proposition}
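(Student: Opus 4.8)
The plan is to show that both sides of the claimed biconditional are equivalent to a threshold rule on the observed estimate $\widehat{\Delta}_i$, and then to match the two thresholds by choosing $\alpha$ appropriately. The $p$-value side is immediate algebra: since $\Phi$ is strictly increasing,
\[
p_i \leq \alpha \iff 1-\Phi\!\left(\widehat{\Delta}_i\sqrt{n_i}/\sigma\right) \leq \alpha \iff \widehat{\Delta}_i \geq \frac{\sigma}{\sqrt{n_i}}\,\Phi^{-1}(1-\alpha) =: t(\alpha).
\]
As $\alpha$ ranges over $(0,1)$, the map $\alpha \mapsto t(\alpha)$ is a strictly decreasing bijection onto all of $\mathbb{R}$ (with $t(\alpha)\to+\infty$ as $\alpha\to0$ and $t(\alpha)\to-\infty$ as $\alpha\to1$). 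So thresholding the one-sided $p$-value at level $\alpha$ is exactly the same as the decision rule ``ship iff $\widehat{\Delta}_i \geq c$'' for a cutoff $c=t(\alpha)$ that can be set to any real value.

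The substance of the proof is to show the posterior side is also such a threshold rule. Define $g(y) := \E_G[u(\Delta)\mid\widehat{\Delta}_i = y]$ (assuming the mild integrability condition $\E_G|u(\Delta)|<\infty$). First I would argue that $g$ is non-decreasing in $y$ via a monotone likelihood ratio argument. By Bayes' rule the posterior law of $\Delta$ given $\widehat{\Delta}_i=y$ has density proportional to $\exp\!\left(-\frac{n_i}{2\sigma^2}(y-\Delta)^2\right)\,dG(\Delta)$, and for $y_1<y_2$ the likelihood ratio of the two posterior kernels equals $\exp\!\left(\frac{n_i}{\sigma^2}(y_2-y_1)\Delta\right)$ up to a factor independent of $\Delta$, which is increasing in $\Delta$. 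Hence the Gaussian location family has monotone likelihood ratio, so the posterior distribution of $\Delta$ given $y$ is stochastically increasing in $y$; integrating the increasing function $u$ against it yields $g(y_1)\leq g(y_2)$. Consequently the superlevel set $\{y : g(y)>0\}$ is an up-set, i.e. a half-line of the form $(y^\ast,\infty)$ for some $y^\ast\in[-\infty,+\infty]$.

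To finish, I would choose $\alpha$ so that $t(\alpha)=y^\ast$; this is possible for any finite $y^\ast$ because $t(\cdot)$ surjects onto $\mathbb{R}$. Then $p_i\leq\alpha \iff \widehat{\Delta}_i\geq y^\ast$, while $g(\widehat{\Delta}_i)>0 \iff \widehat{\Delta}_i> y^\ast$, so the two events agree except possibly on the single point $\widehat{\Delta}_i=y^\ast$; since $\widehat{\Delta}_i$ is absolutely continuous this point has probability zero, and the equivalence can be made exact by breaking the boundary consistently (e.g. matching the closed and open inequalities, using strict monotonicity of $g$ when $u$ is strictly increasing and $G$ nondegenerate). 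The degenerate cases $y^\ast=-\infty$ (always ship) and $y^\ast=+\infty$ (never ship) correspond to the limiting thresholds $\alpha=1$ and $\alpha=0$. I expect the main obstacle to be the monotonicity step: establishing that the posterior expectation of an arbitrary increasing $u$ is monotone in $\widehat{\Delta}_i$ for a general prior $G$, which is precisely where the monotone-likelihood-ratio structure of the Gaussian observation model does the work; the remaining boundary and surjectivity bookkeeping is routine.
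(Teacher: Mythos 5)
Your proof is correct, and its skeleton matches the paper's: establish that $y \mapsto \E_G[u(\Delta)\mid \widehat{\Delta}_i = y]$ is monotone, deduce that the optimal ship rule is a threshold rule in $\widehat{\Delta}_i$, and then convert that threshold into an $\alpha$ level using the strict monotonicity of $1-\Phi$. Where you genuinely diverge is in the proof of the monotonicity step, which is the real content of the result. The paper isolates this as a lemma and proves it by differentiating the conditional expectation in $\widehat{\Delta}_i$ (justified by dominated convergence), writing the derivative as a posterior covariance of $u(\Delta)$ and $\Delta$, and invoking the Harris--FKG correlation inequality to show that covariance is nonnegative. You instead observe that the Gaussian location kernel has monotone likelihood ratio, so the posterior laws $\set{G_y}$ are increasing in first-order stochastic dominance, and integrating the increasing $u$ against them gives monotonicity directly. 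Your route is more elementary and more robust: it needs only integrability of $u$, avoids the differentiation-under-the-integral bookkeeping, and is the standard decision-theoretic argument. The paper's route yields something slightly stronger as a byproduct, namely an explicit covariance formula for the derivative of the posterior expectation, which quantifies the monotonicity. You are also more careful than the paper at the boundary: you correctly note that $\set{y : g(y) > 0}$ is an open half-line while $\set{p_i \leq \alpha}$ is closed, so the equivalence can fail at the single point $\widehat{\Delta}_i = y^\ast$ and holds almost surely (or exactly after consistent tie-breaking); the paper's proof silently slides between $\geq 0$ and $> 0$ and ignores this point, as well as the degenerate cases $y^\ast = \pm\infty$ that you handle with $\alpha \in \set{0,1}$.
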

A proof of this result and all others can be found in the Appendix. Prop. \ref{prop:pvalue_framework_equivalence} suggests there is no loss of generality in considering a $p$-value decision framework versus one which ships when the conditional expectation is positive. Only the $\alpha$ budget per test or equivalently the $t$-statistic threshold needs to be calibrated.

As a result, organizations can leverage existing infrastructure which already calculates $p$-values. Further, the mapping between the A/B Testing Problem and $p$-values suggests an optimal choice for $p$-value thresholds that maximize returns. We explore this further in Section \ref{sec:suboptimality_decisions}.

\begin{example}[Normal Prior, Linear Utility]
\label{ex:normalprior_linearutility}
In the normal prior case, the $t$-statistic thresholds are explicit. Take $u(x) = x$. In this case, we begin by finding the critical value $c_x$ such that $\E[\Delta_i | \hat\Delta_i = c_x] = 0$. Recall that the posterior distribution of $\Delta_i$ given $\widehat{\Delta}_i$ is $\sf{N}(m,s^2)$ with 
\begin{align*}
    m & := \hat \Delta_i \frac{\tau^2}{\tau^2 + \sigma^2/n_i} + \mu\frac{\sigma^2/n_i}{\tau^2 + \sigma^2/n_i} \\
    s^2 & := \frac{\tau^2\frac{\sigma^2}{n}}{\tau^2 + \frac{\sigma^2}{n}}.
\end{align*}
Inspecting the posterior mean, we find $c_x = -\mu\sigma^2/n_i\tau^2$. The resulting threshold as a $t$-statistic is given by

\begin{equation}
\label{eq:optimal_threshold}
\frac{-\mu\sigma}{\tau^2\sqrt{n_i}},
\end{equation}

which is dependent on the test, and can be interpreted as the product $\mu/\tau$ and the precision ratio $\frac{\sigma/\sqrt{n_i}}{\tau}.$ It is not difficult to implement the proper $\alpha$ threshold corresponding to \eqref{eq:optimal_threshold}. For $\mu = 0$, this would correspond to rejecting whenever the one-sided $p$-value is smaller than $\alpha = 1/2.$ 

\begin{remark}
The same formula for the optimal $t$-statistic threshold is contained in Theorem 2 of \cite{goldberg2017decision} and also Proposition 1 of \cite{azevedo2023b}. We show that the connection is generic, in the sense that a properly tuned $t$-threshold can always match the optimal solution based on posterior expectations, for any increasing utility $u$ and prior $G.$

Figure \ref{fig:test_passing} shows various statistics of the optimal decision rule as a function of $\sqrt{n_i}.$ As the number of units allocated to a test increases, the test stringency decreases. This could result in a misalignment of incentives. Individual experimenters desire to make $n$ as large as possible to make the test easier to pass, while the return-aware framework framework assumes that allocating units to A/B tests is done on the program level.

A simple computation shows that the probability of a test passing is given by the marginal probability that $\hat \Delta_i$ is greater than $-\mu\sigma^2/\tau^2n$. The probability is thus 
    \[
    \Phi\left(\frac{\mu}{\tau^2} \sqrt{\tau^2 + \sigma^2/n} \right).
    \]
% The chance a test is a false positive can be calculated similarly.
% \begin{align*}
% & \Prob\left(\Delta_i < 0, \hat \Delta_i > -\frac{\mu\sigma^2}{\tau^2n} \right) \\
% = & \int^0_{-\infty} \Phi\left( \frac{\mu \sigma}{\tau^2\sqrt{n}} + \frac{\Delta_i\sqrt{n}}{\sigma} \right) dG(\Delta_i).
% \end{align*}
Figure \ref{fig:test_passing} shows that for the parameters estimated at Netflix, the probability of a test passing rapidly asymptotes to $\Phi(\mu/\tau)$, so there is a weak incentive to overallocate to a test. 
\end{remark}

\begin{figure}
    \centering
    \includegraphics[width = 0.7\linewidth]{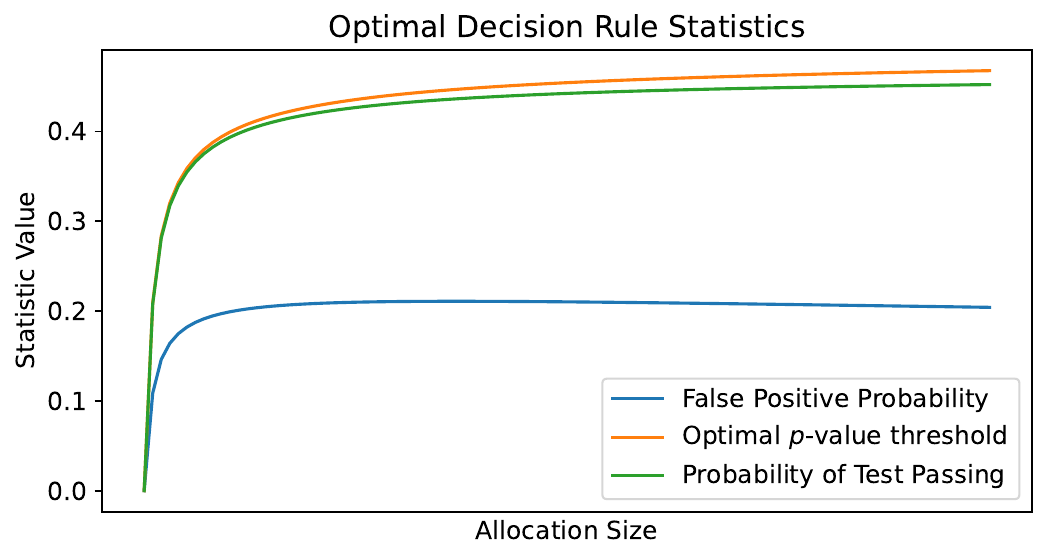}
    \caption{$p$-value threshold as a function of allocation for the optimal $t$-statistic threshold in Equation \eqref{eq:optimal_threshold}, alongside other statistics for a particular experimentation program of interest. $x$-axis is hidden to not disclose specifics of Netflix experimentation practice but is on the order of $10^5$.}
    \label{fig:test_passing}
\end{figure}

\end{example}

\subsection{Approximating Optimal Program-Level Returns}
\label{sec:metaproduction}

% If we relax the integer constraints on $n_i$, we can study solutions of the optimization problem more easily. Numerical simulations, for various distributions $G$, suggest that the optimization problem has a simple solution. The allocation distributes all individuals equally among some number $I_0$ tests, and zero on the rest:
% \begin{equation}
% \label{eq:approx_program production}
% \max_{I_0 \in [1,I]} I_0f(N/I_0)    
% \end{equation}
% We can motivate this solution also from an operational equity perspective. It is difficult to explain why one idea should merit a higher allocation (and higher precision) than any other idea we are testing. A mathematical proof that the optimal solution of the A/B testing problem is given by \eqref{eq:approx_program production} is interesting but appears technical. We provide a heuristic argument in the appendix, which gives strong evidence for this claim for Gaussian priors.

The dynamic programming solution suggests that the optimal allocation distributes units equally across some subset of tests $I_0 \leq I$ and zero on the rest. This suggests a useful approximation for the optimal value of the A/B testing problem, given by
\begin{equation}
\label{eq:metaproduction}
F(I,N) := \max_{i \in [1,I]} i \; f\left( \frac{N}{i}\right),
\end{equation}
where $f$ is the production function in the base A/B testing problem. We will refer to this as the \textit{metaproduction function}, representing expected program-level returns using the optimal allocation and decision procedure. Studying $F$ lets us understand the sensitivity of program returns to $I,N$ and various features of the prior. For example, consider the \textit{idea meta-production function} $F_1(I) = F(I,N)$ for a fixed large $N$. The shape of $F_1$ sheds insight into the value of idea generation, which can inform investment decisions into the experimentation program. 

To gain intuition on $F$, we will focus on the setting where $G$ is Gaussian with $u(x) = x.$ By conjugacy, the production function can be explicitly calculated. This formula was first written in \cite{raiffa2000applied} and recently studied in \cite{azevedo2023b}; similar formulas appear in \cite{abadie2023estimating}. 

\begin{lemma}[Proposition 2 of \cite{azevedo2023b}]
Defining $v(n) = \tau^2 + \sigma^2/n$, then $f(n)$ is given by
\begin{equation}
\label{eq:normal_production_function}
 \frac{\tau^2}{\sqrt{v(n)}} \phi\left(\frac{\mu}{\tau^2}\sqrt{v(n)}\right)  + \mu\Phi\left(\frac{\mu}{\tau^2}\sqrt{v(n)}\right)
\end{equation}
\end{lemma}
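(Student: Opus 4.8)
The plan is to use the optimal decision rule of Lemma~\ref{lemma:fat_tails} to reduce $f(n)$ to the expectation of a rectified Gaussian, and then evaluate that expectation in closed form. Under linear utility $u(x)=x$, the posterior expectation $\E[u(\Delta)\mid\hat\Delta;n]$ equals the posterior mean $m$, so by definition of the production function
\[
f(n)=\E\!\left[m^+\right],\qquad m=\E[\Delta\mid\hat\Delta;n],
\]
where the outer expectation is taken over the marginal law of $\hat\Delta$. The whole argument therefore hinges on identifying the distribution of the random variable $m$ (as $\hat\Delta$ varies) and then integrating its positive part.

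First I would recall the conjugate-Gaussian posterior mean from Example~\ref{ex:normalprior_linearutility},
\[
m=\hat\Delta\,\frac{\tau^2}{v(n)}+\mu\,\frac{\sigma^2/n}{v(n)},
\]
which is affine in $\hat\Delta$. Since $\Delta\sim\mathsf{N}(\mu,\tau^2)$ and $\hat\Delta\mid\Delta\sim\mathsf{N}(\Delta,\sigma^2/n)$, the marginal law of $\hat\Delta$ is $\mathsf{N}(\mu,v(n))$ with $v(n)=\tau^2+\sigma^2/n$. Propagating this Gaussian through the affine map gives $m\sim\mathsf{N}(\mu,\rho^2)$: the mean simplifies to $\mu(\tau^2+\sigma^2/n)/v(n)=\mu$ (consistent with the tower property $\E[m]=\E[\Delta]=\mu$), and the variance is $(\tau^2/v(n))^2\,v(n)=\tau^4/v(n)$, so that $\rho=\tau^2/\sqrt{v(n)}$.

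The remaining step is the standard rectified-Gaussian identity: for $X\sim\mathsf{N}(\mu,\rho^2)$ one has $\E[X^+]=\mu\,\Phi(\mu/\rho)+\rho\,\phi(\mu/\rho)$. I would derive this by substituting $z=(x-\mu)/\rho$ in $\E[X^+]=\int_0^\infty x\,\rho^{-1}\phi((x-\mu)/\rho)\,dx$ and splitting $x=\mu+\rho z$; the constant part integrates to $\mu\,\Phi(\mu/\rho)$ via the Gaussian tail, while the linear part integrates to $\rho\,\phi(\mu/\rho)$ using $\int z\,\phi(z)\,dz=-\phi(z)$. Substituting $\rho=\tau^2/\sqrt{v(n)}$ and $\mu/\rho=(\mu/\tau^2)\sqrt{v(n)}$ then yields exactly
\[
f(n)=\frac{\tau^2}{\sqrt{v(n)}}\,\phi\!\left(\frac{\mu}{\tau^2}\sqrt{v(n)}\right)+\mu\,\Phi\!\left(\frac{\mu}{\tau^2}\sqrt{v(n)}\right).
\]

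There is no serious obstacle here; the argument is a routine conjugacy computation combined with a well-known closed form. The one place to be careful is the bookkeeping of the variance of $m$: it is tempting to conflate the posterior variance $s^2=\tau^2(\sigma^2/n)/v(n)$ with the marginal variance $\rho^2$ of the posterior mean, but these are distinct quantities, and it is $\rho^2=\tau^4/v(n)$ — the dispersion of $m$ induced by the randomness of $\hat\Delta$ — that enters the rectified-Gaussian formula.
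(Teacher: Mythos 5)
Your proof is correct. The paper itself does not prove this lemma; it imports the result by citation (Proposition 2 of \cite{azevedo2023b}, with the formula tracing back to \cite{raiffa2000applied}), so there is no internal proof to compare against. Your derivation is the standard conjugacy argument that establishes it: marginally $\hat\Delta \sim \mathsf{N}(\mu, v(n))$, pushing this through the affine posterior-mean map gives $m \sim \mathsf{N}\bigl(\mu, \tau^4/v(n)\bigr)$, and the rectified-Gaussian identity $\E[X^+]=\mu\,\Phi(\mu/\rho)+\rho\,\phi(\mu/\rho)$ then yields Eq.~\eqref{eq:normal_production_function} exactly. Your closing caution is well placed: the quantity entering the formula is the marginal variance $\tau^4/v(n)$ of the posterior mean, not the posterior variance $s^2=\tau^2(\sigma^2/n)/v(n)$, and conflating the two is the one real pitfall in this computation.
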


This can be simplified to $\mu G\left(\frac{\mu}{\tau^2}\sqrt{\tau^2 + \frac{\sigma^2}{n}} \right)$ where $G(x) = \phi(x) / x + \Phi(x).$ It is shown in \cite{azevedo2023b} that the production function is generically convex on some interval $[0,\hat{x}]$ and concave afterwards. The value $\hat{x}$ depends on $\mu,\tau$. The following theorem describes $F$ for a general Gaussian prior.

\begin{theorem}[Description of the Metaproduction Function]
\label{thm:metaprod_description}
Let $x^*$ be the global maximum of the function $f(x)/x$ on $\mathbb{R}^+$. Suppose $I \geq 1$. Then %\todo{Estimates on this value would be useful. What order?}  
\begin{equation}
\label{eq:metaprod_description}
F(I,N) = 
\begin{cases}
    f(N) & \text{if } x^* \geq N \\
    N\frac{f(x^*)}{x^*} & \text{if } x^* \in [\frac{N}{I},N] \\
    I \; f\left( \frac{N}{I}\right) & \text{if } x^* \leq \frac{N}{I}.
\end{cases}
\end{equation}
Moreover, the number of tests maximizing $if(N/i)$ is given by
\begin{equation}
\label{eq:metaprod_argmax_description}
\argmax_{[1,I]} \; \; i \;f\left(N/i \right) = 
\begin{cases}
    1 & \text{if } x^* \geq N \\
    N/x^* & \text{if } x^* \in [\frac{N}{I},N] \\
    I & \text{if } x^* \leq \frac{N}{I}.
\end{cases}
\end{equation}
For $I \leq 1$, set $F := 0.$
\end{theorem}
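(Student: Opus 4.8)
The plan is to collapse the two-variable optimization defining the metaproduction function into a one-dimensional problem and then exploit a shape property of the production function. First I would substitute $x = N/i$. Since $i\,f(N/i) = N\,f(x)/x$ and the map $i \mapsto N/i$ is a strictly decreasing bijection from $[1,I]$ onto $[N/I,N]$, maximizing $i\,f(N/i)$ over $i \in [1,I]$ is the same as maximizing $g(x) := f(x)/x$ over $x \in [N/I,N]$, with the optimal index recovered as $i = N/x$. Thus $F(I,N) = N\max_{x \in [N/I,N]} g(x)$, and the whole theorem reduces to understanding where $g$ is maximized on a subinterval of $\mathbb{R}^+$.

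The key step is to show that $g$ is strictly unimodal on $\mathbb{R}^+$: increasing on $(0,x^*)$ and decreasing on $(x^*,\infty)$, where $x^*$ is the global maximizer from the statement. I would study $h(x) := x f'(x) - f(x)$, since $g'(x) = h(x)/x^2$ shares its sign. Differentiating gives the clean identity $h'(x) = x f''(x)$, so by the convex-then-concave shape of the Gaussian production function established in \cite{azevedo2023b} (with inflection point $\hat{x}$), $h$ increases on $(0,\hat{x})$ and decreases on $(\hat{x},\infty)$. At the boundaries, $h(0) = -f(0) = 0$ because $f(0)=0$, and $h(x) \to -f(\infty) < 0$ as $x \to \infty$, using that $x f'(x) \to 0$ (a routine tail estimate from the explicit formula \eqref{eq:normal_production_function}). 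Hence $h$ rises from $0$ to a positive maximum and then descends below $0$, so it crosses zero exactly once, from positive to negative; this unique crossing is precisely the maximizer $x^*$ of $g$, which establishes the unimodality.

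Granting unimodality, the three cases follow by locating $[N/I,N]$ relative to the peak $x^*$. If $x^* \ge N$, the interval lies to the left of the peak, $g$ is increasing there, and the maximum is at $x=N$, giving $F = N\,g(N) = f(N)$ and optimal index $i=1$. If $x^* \in [N/I,N]$, the unconstrained maximizer is feasible, so $F = N\,f(x^*)/x^*$ and $i = N/x^*$ (which indeed lies in $[1,I]$). If $x^* \le N/I$, the interval lies to the right of the peak, $g$ is decreasing, and the maximum is at $x = N/I$, giving $F = N\,g(N/I) = I\,f(N/I)$ and $i = I$. This reproduces both \eqref{eq:metaprod_description} and \eqref{eq:metaprod_argmax_description}; the value $F:=0$ for $I \le 1$ is the stated convention.

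I expect the unimodality step to be the main obstacle. The identity $h' = x f''$ makes the interior behavior immediate from the convex–concave structure, but the argument hinges on the two boundary facts -- $h(0)=0$ and, more delicately, $h(\infty) < 0$ -- which is where the specific tail decay of the Gaussian production function (ensuring $g$ genuinely turns down after the peak rather than leveling off) must be verified rather than assumed.
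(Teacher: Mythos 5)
Your proposal is correct and follows essentially the same route as the paper's proof: the reparametrization $x = N/i$ collapsing $F(I,N)$ to $N\max_{x\in[N/I,N]} f(x)/x$, unimodality of $g(x)=f(x)/x$ via the sign of $h(x) = xf'(x) - f(x)$ with the identity $h'(x) = xf''(x)$ and the convex-then-concave structure from \cite{azevedo2023b}, followed by the same three-case analysis. The only differences are cosmetic boundary treatments: the paper establishes $h(0)\geq 0$ by interpreting it as the limit of $xf'(x)$ (rather than asserting $h(0)=0$, which silently drops that nonnegative term, harmlessly), and it deduces eventual negativity of $h$ from $f$ being increasing and bounded above rather than from your explicit tail estimate $xf'(x)\to 0$; both versions are sound.
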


Theorem \ref{thm:metaprod_description} yields important insights about the value of additional members and ideas for experimentation programs; implications are given in the following corollaries and in Section \ref{sec:metaproduction_insights}. The proof also shows that $x^* \geq \hat x$. Furthermore, the theorem only uses the fact that $f(x)$ is convex for small $x$ then concave for large $x$; if this holds for other priors, then the same results follow. 

\begin{corollary}[Metaproduction Function in $N,I$] 
\label{cor:metaproduction_marginals}
With the same notation as the previous theorem, the following statements are true.
\begin{enumerate}
    \item As a function on $[1,I]$, $F_1(I)$ is concave increasing and is constant for $I$ large enough.
    \item $F_2(N)$ is increasing, convex on interval $[0,\hat{x}]$, and concave on $[\hat x, \infty)$. Moreover, it is linear in $N$ for some intermediate range, strictly concave for large enough $N$, and asymptotes to a fixed value.
\end{enumerate}
\end{corollary}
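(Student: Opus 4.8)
The plan is to read off the explicit piecewise form of $F$ from Theorem~\ref{thm:metaprod_description} and reduce both claims to elementary first- and second-derivative computations, treating $I$ and $N$ as continuous variables. Throughout I use three inputs. First, $f$ is increasing: allocating more units yields a weakly more informative experiment, so the value of the optimal Bayes decision cannot decrease (equivalently, this is immediate by differentiating the Gaussian formula \eqref{eq:normal_production_function}, where $v(n)$ is decreasing). Second, as established in the proof of the theorem, $f$ is convex on $[0,\hat x]$ and concave on $[\hat x,\infty)$ with $x^*\ge\hat x$. Third, $f(x)/x$ is unimodal with unique maximizer $x^*$ satisfying the first-order condition $f(x^*)=x^*f'(x^*)$, and is decreasing for $x>x^*$ (geometrically, $x^*$ is where the ray from the origin is tangent to the concave part of $f$).

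For part~1, fix $N$ large so that $N>x^*$ and specialize \eqref{eq:metaprod_description}: for $I\le N/x^*$ we are in the third branch with $F_1(I)=I\,f(N/I)$, and for $I\ge N/x^*$ we are in the second branch with $F_1(I)=N f(x^*)/x^*$, a constant — this gives the ``constant for $I$ large enough'' assertion. On the first piece, writing $t=N/I\ge x^*$, I would compute $F_1'(I)=f(t)-t f'(t)$ and $F_1''(I)=(Nt/I^2)\,f''(t)$. Since $f(x)/x$ is decreasing past $x^*$, we get $f(t)-t f'(t)\ge 0$, so $F_1$ is increasing; since $t\ge x^*\ge\hat x$ lies in the concave regime, $f''(t)\le 0$, so $F_1$ is concave. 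The two pieces glue in a $C^1$ fashion because at $I=N/x^*$ one has $t=x^*$ and $F_1'(I)=f(x^*)-x^*f'(x^*)=0$, matching the zero slope of the constant piece; hence $F_1$ is globally concave increasing.

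For part~2, fix $I$ and read $F_2(N)$ off \eqref{eq:metaprod_description}: it equals $f(N)$ on $[0,x^*]$, the linear function $N f(x^*)/x^*$ on $[x^*,Ix^*]$, and $I\,f(N/I)$ on $[Ix^*,\infty)$. The convex-then-concave shape across $\hat x$ is inherited directly from $f$: on $[0,\hat x]\subseteq[0,x^*]$ we have $F_2=f$, which is convex; on $[\hat x,x^*]$ we again have $F_2=f$, now concave; the middle piece is linear, hence weakly concave and witnessing the asserted intermediate linear range (nontrivial once $I>1$); and on $[Ix^*,\infty)$ we compute $F_2''(N)=f''(N/I)/I<0$, giving strict concavity for large $N$. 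Monotonicity follows from $f'\ge 0$ together with the positive slope $f(x^*)/x^*>0$ of the linear piece. Finally, $F_2(N)=I\,f(N/I)\to I\,f(\infty)=I\,\E[\Delta^+]=I\big(\tau\phi(\mu/\tau)+\mu\Phi(\mu/\tau)\big)$ as $N\to\infty$, a finite value, giving the asymptote.

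The only genuine subtlety — and the step I would treat most carefully — is verifying that the three pieces glue into a single function with the claimed \emph{global} concavity on $[\hat x,\infty)$, rather than merely being piecewise concave with possible upward kinks at the junctions. This again reduces to the first-order condition: at both junctions $N=x^*$ and $N=Ix^*$ the one-sided derivatives coincide (each equal to $f(x^*)/x^*=f'(x^*)$), so $F_2$ is $C^1$ there and no spurious convex kink is introduced; combined with piecewise concavity this yields global concavity on $[\hat x,\infty)$.
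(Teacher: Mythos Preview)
Your proposal is correct and follows essentially the same route as the paper: both read off the explicit piecewise form of $F$ from Theorem~\ref{thm:metaprod_description} and exploit $x^*\ge\hat x$ together with the convex-then-concave shape of $f$. The only differences are in execution---where the paper invokes a coupling argument for monotonicity and the perspective-of-concave-is-concave fact for concavity of $I\,f(N/I)$, you compute first and second derivatives directly and add an explicit $C^1$ gluing check at the junctions that the paper leaves implicit.
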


Understanding the production function also leads to bounds on a type of regret for the base A/B testing problem. This is similar to results quantifying the value of experimentation in \cite{abadie2023estimating}.

\begin{corollary}[Comparing the Full Oracle and Oracle Bayes Solution]
\label{cor:regret_with_oracle}
Let $L_1$ be the expected difference between the full oracle decision and the oracle Bayes decision:
\[
\E\left[ \sum_i \Delta_i^+ \right] - \E\left[ \sum_i \E[\Delta_i|\hat{\Delta}_i]^+ \right].
\]
Suppose $N,I \rightarrow \infty$ such that $N/I \rightarrow \kappa \in (0,\infty)$. Then
\[
\lim_{I \rightarrow \infty} \frac{L_1}{I} =  
\begin{cases}
\E\left[\Delta_i^+\right] - \kappa f(x^*)/x^* & \text{if } \kappa \leq x^* \\
\E\left[\Delta_i^+\right] - f(\kappa) & \text{if } \kappa > x^*
\end{cases}.
\]
\end{corollary}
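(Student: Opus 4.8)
The plan is to rewrite $L_1$ in closed form and then read off its limit directly from Theorem~\ref{thm:metaprod_description}. First I would treat the two terms separately. Since the $\Delta_i$ are iid, the full-oracle term is exactly $\E[\sum_i \Delta_i^+] = I\,\E[\Delta_i^+]$, so its per-idea value is constant in $I$. For the oracle Bayes term, the optimal decision rule of Lemma~\ref{lemma:fat_tails} ships idea $i$ precisely when $\E[\Delta_i\mid\hat\Delta_i]>0$, so its contribution is $\E[\Delta_i\,\mathbf{1}\{\E[\Delta_i\mid\hat\Delta_i]>0\}]$. By the tower property this equals $\E[\E[\Delta_i\mid\hat\Delta_i]\,\mathbf{1}\{\E[\Delta_i\mid\hat\Delta_i]>0\}] = \E[\E[\Delta_i\mid\hat\Delta_i]^+] = f(n_i)$. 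Maximizing over allocations with $\sum_i n_i = N$ and invoking the equal-split structure of the optimum noted before Eq.~\eqref{eq:metaproduction} identifies this with the metaproduction value, giving $L_1 = I\,\E[\Delta_i^+] - F(I,N)$.

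Dividing by $I$, the first term contributes $\E[\Delta_i^+]$ for every $I$, and the remaining work is to evaluate $\lim_{I\to\infty} F(I,N)/I$ under $N/I\to\kappa$. I would substitute the three-case description of $F(I,N)$ from Theorem~\ref{thm:metaprod_description}. Because $N/I\to\kappa>0$ forces $N\to\infty$ while $x^*$ is a fixed constant, the case $x^*\ge N$ is eventually vacuous and can be discarded. For $\kappa<x^*$ we eventually have $N/I<x^*<N$, placing us in the middle case $F(I,N)=N\,f(x^*)/x^*$, whence $F(I,N)/I=(N/I)\,f(x^*)/x^*\to \kappa\,f(x^*)/x^*$. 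For $\kappa>x^*$ we eventually have $x^*<N/I$, placing us in the third case $F(I,N)=I\,f(N/I)$, so $F(I,N)/I=f(N/I)\to f(\kappa)$ by continuity of the production function, which is explicit and smooth via Eq.~\eqref{eq:normal_production_function}. Subtracting each limit from $\E[\Delta_i^+]$ yields exactly the two branches of the claim.

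The only genuinely delicate points are two. The boundary $\kappa=x^*$ forces a choice between the middle and third cases; here continuity rescues the argument, since the two candidate limits coincide: $\kappa\,f(x^*)/x^* = x^*\,f(x^*)/x^* = f(x^*) = f(\kappa)$. The second is the identification of the oracle Bayes term with the metaproduction function $F(I,N)$ rather than with the exact dynamic-programming optimum; I would justify this using the convex-then-concave shape of $f$ together with $x^*\ge\hat x$, under which equal allocation over the optimal number of tests is exactly optimal in the relevant regimes, making the metaproduction description tight. With these two observations in place the result follows from elementary continuity and the casework of Theorem~\ref{thm:metaprod_description}, with no sharp analytic estimate required.
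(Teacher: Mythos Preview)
Your proposal is correct and follows essentially the same route as the paper: write $L_1 = cI - F(I,N)$ with $c=\E[\Delta_i^+]$, then read off the limit of $F(I,N)/I$ case by case from Theorem~\ref{thm:metaprod_description}, discarding the $x^*\ge N$ branch since $N\to\infty$. You are actually more careful than the paper on two points the paper leaves implicit: the boundary case $\kappa=x^*$ (which you resolve by noting the two candidate limits agree) and the identification of the oracle Bayes term with the metaproduction function rather than the exact DP optimum.
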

Thus, the loss scales with the number of ideas, and the loss per idea depends on the ratio $\kappa$. Since $\lim_{\kappa \rightarrow \infty} f(\kappa) = \E[\Delta_i^+],$ for large enough $N$ compared to $I$, the loss per idea is negligible.

% \begin{corollary}[Full Oracle Regret of the Sequential Problem]
% \label{cor:sequential_regret}
% Let $L_2$ be the expected loss between the full oracle decision and the oracle Bayes solution at time $T:$
% \[
% L_2 := \E\left[ \sum_i \Delta_i^+ \right] - \max \sum_{t=1}^T F(I_t,N),
% \]
% where the maximum is over choices of $I_t$ such that $\sum_{t=1}^T I_t = I.$ Then whenever $I \geq T$,
% \[
% L_2 =
% \begin{cases}
% I \E[\Delta_i^+] - Tf(N)    & \text{if } N \leq x^* \\
% I \E[\Delta_i^+] - T N f(x^*)/x^*    & \text{if } N \in \left[x^*, \frac{Ix^*}{T}\right] \\
% I\left( \E[\Delta_i^+] - f\left( \frac{NT}{I} \right) \right)   & \text{if } N \geq \frac{Ix^*}{T}
% \end{cases}
% \]
% \end{corollary}

Finally, we analyze how the changes to the prior density of ideas affects $F$. This shows the value of changing the distribution of ideas as a lever for optimization experimentation returns and extends "comparative statics" results found in \cite{azevedo2023b}.

\begin{corollary}[Metaproduction Function in $\mu,\tau$]
\label{cor:metaproduction_vs_mu_tau}
For any fixed $I,N$, $F(I,N)$ is increasing as $|\mu| \rightarrow 0$ for fixed $\tau$; it is also increasing as $\tau \rightarrow \infty$ for fixed $\mu$.
\end{corollary}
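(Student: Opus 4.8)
The plan is to reduce the comparative statics of the metaproduction function to comparative statics of the single-idea production function, exploiting the fact that a pointwise maximum of monotone functions is monotone. Since $F(I,N)=\max_{i\in[1,I]} i\,f(N/i)$ in \eqref{eq:metaproduction} and $I,N$ are held fixed, it suffices to show that for every fixed allocation $n$ the production function $f(n)$ increases as $|\mu|\to 0$ (respectively as $\tau\to\infty$). Indeed, if $f(n;\theta_1)\le f(n;\theta_2)$ for all $n$, then for each $i$ we have $i\,f(N/i;\theta_1)\le i\,f(N/i;\theta_2)\le \max_{j\in[1,I]} j\,f(N/j;\theta_2)$, whence $F(I,N;\theta_1)\le F(I,N;\theta_2)$. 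This reduction requires no knowledge of where the maximizing $i$ lies, so in particular I never need to track how the threshold $x^*$ of Theorem \ref{thm:metaprod_description} moves with the parameters; the problem localizes cleanly to $f$.

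Next I would put $f$ in a form that isolates the parameter dependence. Marginalizing over both the prior and the sampling noise, the posterior mean $m=\E[\Delta\mid\hat\Delta]$ is itself Gaussian with mean $\mu$ and variance $\tau^4/v(n)$, so that writing $s:=\tau^2/\sqrt{v(n)}$ gives $f(n)=\E[X^+]$ with $X\sim\mathsf{N}(\mu,s^2)$; expanding $\E[X^+]$ reproduces \eqref{eq:normal_production_function}. Setting $h(\mu,s):=\mu\Phi(\mu/s)+s\,\phi(\mu/s)$, the two relevant partials are clean: differentiating under the expectation (legitimate since $x\mapsto x^+$ is Lipschitz) yields $\partial h/\partial\mu=\Phi(\mu/s)\ge 0$ and, after using $\phi'(x)=-x\phi(x)$, $\partial h/\partial s=\phi(\mu/s)\ge 0$. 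Thus $h$ is separately increasing in the marginal mean $\mu$ and in the marginal standard deviation $s$.

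With these two monotonicities I would dispatch the two claims. For the $\mu$ statement, observe that $s=\tau^2/\sqrt{\tau^2+\sigma^2/n}$ does not depend on $\mu$; since we work in the regime $\mu\le 0$, sending $|\mu|\to 0$ means raising $\mu$ toward $0$, and $\partial h/\partial\mu\ge 0$ gives that $f(n)=h(\mu,s)$ increases for every $n$, which by the reduction above transfers to $F$. For the $\tau$ statement I would use the chain rule $\partial f/\partial\tau=(\partial h/\partial s)(\partial s/\partial\tau)=\phi(\mu/s)\,(\partial s/\partial\tau)$, reducing everything to the monotonicity of $s$ in $\tau$; writing $w=\tau^2$ and $a=\sigma^2/n$, one has $s=w/\sqrt{w+a}$ with $ds/dw=(w/2+a)(w+a)^{-3/2}>0$ and $dw/d\tau=2\tau>0$, so $\partial s/\partial\tau>0$ and $f(n)$ increases in $\tau$ for every $n$.

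The main obstacle is genuinely the only place where something must be checked rather than invoked, namely establishing the two monotonicities of the scalar quantity $h(\mu,s)$ and, crucially, verifying that the effective standard deviation $s=\tau^2/\sqrt{\tau^2+\sigma^2/n}$ is increasing in $\tau$ — this is not obvious at a glance, since $\tau$ appears in both the numerator and the denominator, and it is exactly the point where the argument could fail if the denominator dominated. Once that elementary inequality is in hand, the max-of-monotone-functions reduction does all remaining work and the conclusion for $F$ follows for both parameters simultaneously.
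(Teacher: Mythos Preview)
Your proposal is correct and follows the same high-level strategy as the paper: establish monotonicity of the single-idea production function $f(n)$ in $\mu$ and $\tau$, then transfer this to $F(I,N)$. The differences are in execution. The paper cites Proposition~4 of \cite{azevedo2023b} for $\partial f/\partial\mu>0$ (for $\mu<0$) and $\partial f/\partial\tau>0$, whereas you derive these directly via the representation $f(n)=h(\mu,s)$ with $s=\tau^2/\sqrt{v(n)}$ and the clean identities $\partial h/\partial\mu=\Phi(\mu/s)$, $\partial h/\partial s=\phi(\mu/s)$; this makes the argument self-contained. For the transfer step, the paper invokes Theorem~\ref{thm:metaprod_description} to say $F(I,N)$ is ``some positive multiple of $f$'', but this is slightly delicate since the relevant case---and in particular the argument $x^*$ in the middle case---moves with $\mu,\tau$. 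Your max-of-monotone-functions reduction sidesteps this entirely: pointwise monotonicity of $f(n;\theta)$ in $\theta$ passes to $\max_{i}i\,f(N/i;\theta)$ with no need to locate the maximizer. Your route is therefore both more elementary and more robust.
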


A quick inspection of Eq. \eqref{eq:normal_production_function} shows that $f(n) \rightarrow \infty$ as $\tau \rightarrow \infty.$ Thus if $\tau$ increases quickly enough, even as $\mu \downarrow \infty$, it is possible to obtain more value from an experimentation program.

\section{Implications on Experimentation Practice}
\label{sec:implications}

In many firms, experimentation programs are synonymous with business departments; the managers of these business departments use experimentation as a tool for decision making and optimizing the growth of the program. We highlight the implications of the return-aware framework by answering the following questions that these managers often have in practice:

\begin{enumerate}
    \item How many tests should we be running with a given pool of ideas and allocation? Are many underpowered tests better than a few high-powered tests? 
    \item Would changing $p$-value thresholds increase experimentation returns? Should other decision rules be used?
    \item What is the value of generating more ideas for a particular experimentation program? What is the impact of changing the underlying distribution of idea returns? 
    \item How should limited resources be allocated to multiple experimentation programs?
\end{enumerate}

\begin{figure}
    \centering
    \includegraphics[width=0.7\linewidth]{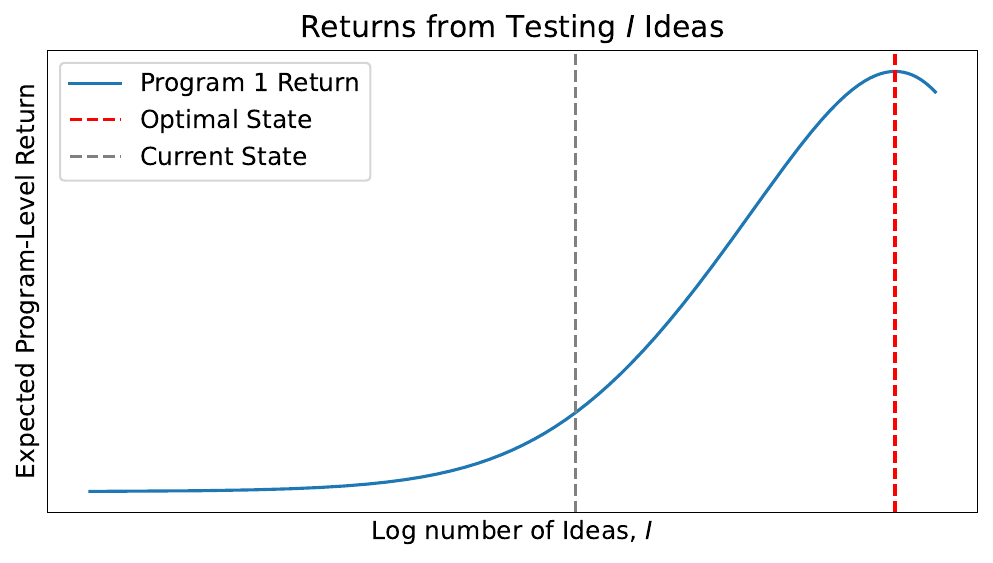}
    \caption{Blue curve shows expected return to testing $I$ ideas with equally sized allocations, using an allocation pool of 250 million units and prior parameters estimated from Program 1 of Fig. \ref{fig:prior_densities}. $x$-axis scale is hidden but spans 4 orders of magnitude. Grey dashed line denotes the number of ideas tested currently in practice, while red dashed line indicates optimum of the return curve.}
    \label{fig:value_of_testing}
\end{figure}

We will answer these questions \textit{under the assumption of no costs}. Section \ref{sec:costs} discusses the role of costs as well as risk-aversion, adding significant nuance to the takeaways of the return-aware framework.

\subsection{Allocations \& The Scale of Testing}
\label{sec:allocation_practice}

This section addresses Question 1. Figure \ref{fig:value_of_testing} shows the value of testing $I$ ideas with a fixed allocation pool of 250 million units, split equally amongst the ideas. The $x$-axis spans 4 orders of magnitude; the red dashed line shows the optimum of the curve, and the grey dashed line shows roughly the scale of testing in current practice. With the available allocation and a much larger pool of untested ideas $I$, it would be optimal to run several orders of magnitude more tests of typical quality than current practice. For any value of $I$ to the left of the optimum, Figure \ref{fig:value_of_testing} also 
suggests that we should test all ideas with equal allocation. The resulting test sizes are much smaller than current test sizes, which are on the order of several million and obtained via power calculations.

In this costless regime, many more tests should be run at the expense of higher allocation and power, which is the strategy termed \textit{lean experimentation} in \cite{azevedo2020b, azevedo2023b}. The surprising finding is that the strategy is optimal for a wide range of parameters, even with a ``light-tailed" Gaussian prior, which gives the opposite intuition from \cite{azevedo2020b}; this nuance was pointed out in \cite{azevedo2023b}. Interestingly, lean experimentation is closely related to the \textit{paradox of power} described in \cite{schmit2019optimal}. \cite{schmit2019optimal} considers a sequential model of testing where the goal is to find discoveries as quickly as possible; the optimal strategy is close to running many short, low-powered tests in sequence.

\subsection{Suboptimality of Standard Decision Rules}
\label{sec:suboptimality_decisions}
This section answers Question 2 by studying the optimal decision rule from Lemma \ref{lemma:fat_tails}. Throughout this section, we will consider decision rules for a test with allocation size $5$ million and assume treatment effects come from the Gaussian prior estimated for Program 1 of Figure \ref{fig:prior_densities}. Figure \ref{fig:test_passing} displays several statistics on the optimal decision rule for a particular experimentation program of interest. A striking feature is that the $p$-value threshold is closer to $0.5$ than to $0.05.$ See Section \ref{sec:new_exp_programs} for a discussion on minimax approaches that suggest similar $p$-value thresholds. This recommendation aligns with several other papers, which also show that $p < 0.05$ decision rules are too conservative \cite{goldberg2017decision, azevedo2020b}.

We will reinforce this finding by demonstrating the suboptimality of $p < 0.05$ decision results in various settings. Given a linear utility with no costs, one way to demonstrate subtoptimality is to visualize other Gaussian priors that induce an optimal decision rule where $p < 0.05$. By Equation \ref{eq:optimal_threshold}, we display several priors whose parameters $\mu,\sigma$ satisfy $-\mu \sigma/\sqrt{n} = \Phi^{-1}\left( 0.95\right) \tau^2.$  Figure \ref{fig:p0.05optimal_priors} shows several such distributions, superimposed on the actual estimated prior. Using the $p < 0.05$ rule corresponds to an unnecessarily pessimistic prior where there is little mass above $x > 0$. The results also show that the general recommendation of adopting more conservative decision thresholds is robust to mis-estimation of the prior distribution.

\begin{figure}
    \centering
    \includegraphics[width=0.6\linewidth]{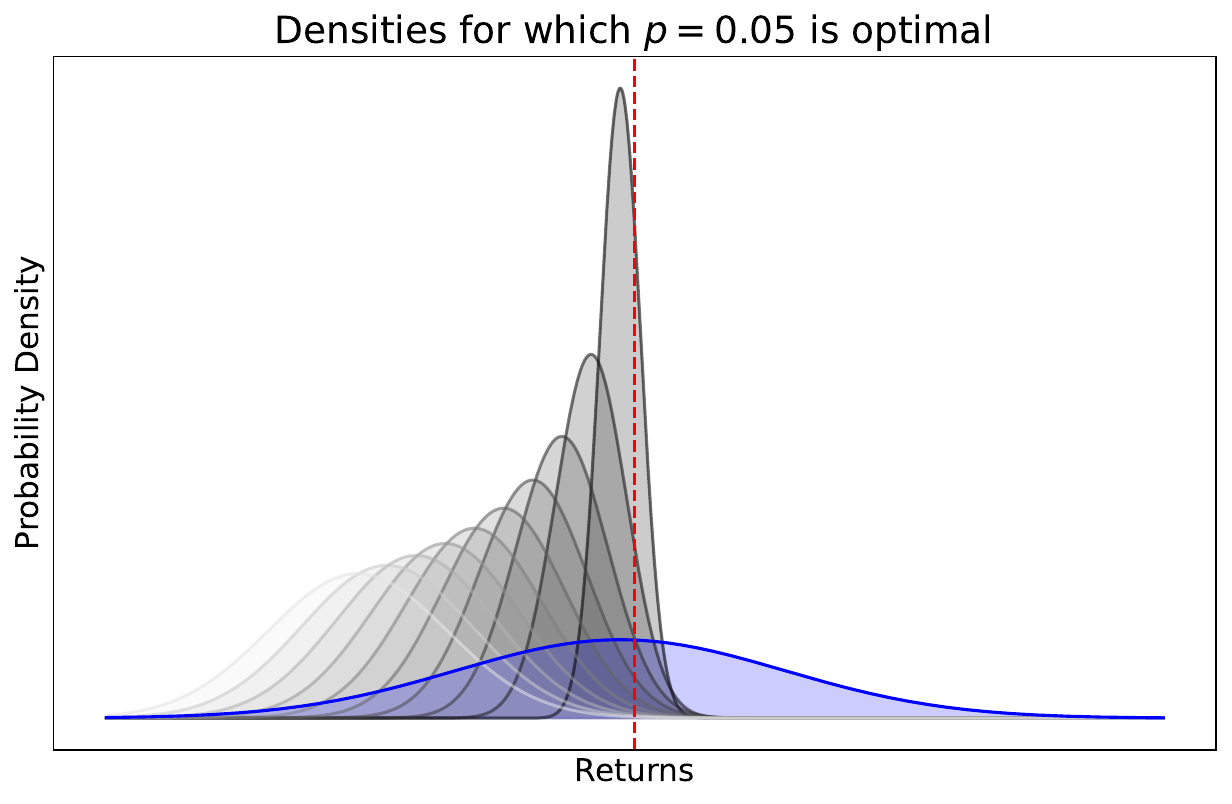}
    \caption{Example Gaussian densities for which the optimal decision for maximizing returns is to ship when the usual one-sided $p$-value is less than $0.05$, shaded in gray. The blue curve represents the Gaussian prior for Program 1 of Figure \ref{fig:prior_densities}. The $x$-axis is obfuscated.}
    \label{fig:p0.05optimal_priors}
\end{figure}

We may also calculate the average return lost when using the suboptimal $p < 0.05$ decision rule. Figure \ref{fig:p0.05_comparison} compares two decision rules, (1) shipping when $\E[\Delta | \widehat{\Delta}] > 0$ and (2) shipping when $p < 0.05$, using the prior $G.$ In the top plot, we compare the expected return from optimally allocating $N$ units to $I$ ideas, with both decision rules. For (1), this is exactly the metaproduction function, while for (2), we obtain the analog of the metaproduction function, using instead $f_{p < 0.05}(n) = \E\left[\Delta_i\mathbf{1}\set{\widehat{\Delta}_i \geq \frac{1.96\sigma}{\sqrt{n}}}\right].$ The simulated example shows that using $p < 0.05$ decisions can miss out on nearly 80\% of returns, as compared to the optimal decision strategy, depending on $I,N$. The bottom plot compares the production function $f(N)$ and $f_{p < 0.05}(N).$ 

% \[
% \E\left[\Delta_i \mathbf{1}\set{\widehat{\Delta}_i \geq \frac{1.96 \sigma}{\sqrt{n_i}}} \right] = 
% \E\left[\E[\Delta_i|\widehat{\Delta}_i] \mathbf{1}\set{\widehat{\Delta}_i \geq \frac{1.96 \sigma}{\sqrt{n_i}}} \right]
% \]
% which equals 
% \[
% \mu \Phi\left(\frac{\mu - 1.96\sigma/\sqrt{n_i}}{\sqrt{\tau^2 + \sigma^2/n_i}} \right) + \frac{\tau^2}{\sqrt{\tau^2 + \sigma^2/n_i}} \phi\left( \frac{\mu - 1.96\sigma/\sqrt{n_i}}{\sqrt{\tau^2 + \sigma^2/n_i}}\right)
% \]

\begin{figure}
    \centering
    \includegraphics[width = \linewidth]{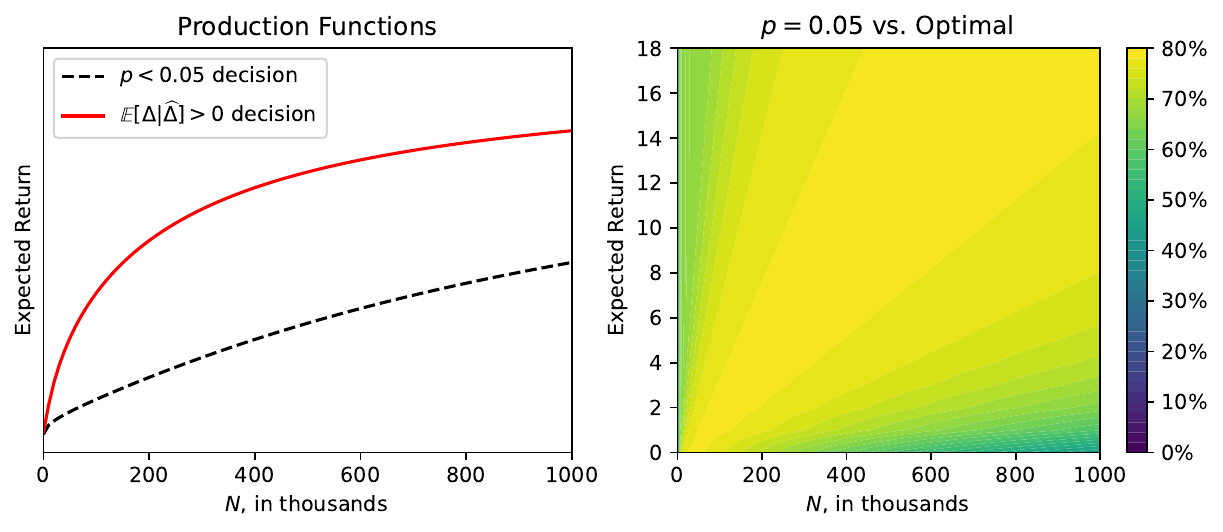}
    \caption{Left: expected return from allocating $N$ units to test a single idea drawn from a prior $G$, using the two different decision procedures. Right: comparison of the expected return from optimally allocating $N$ units to $I$ ideas, shipping ideas based on (1) posterior expectation greater than zero, and (2) $p < 0.05$. The heatmap shows the percentage of return lost from using (2) compared to (1).  Both figures use a Gaussian prior $G$ with parameters similar to those in real experimentation programs. }
    \label{fig:p0.05_comparison}
\end{figure}

\subsection{Insights from the Metaproduction Function.}
\label{sec:metaproduction_insights}

Studying the metaproduction function for an experimentation program answers Question 3. Figure \ref{fig:netflix_metaproduction} shows the metaproduction function with estimated Gaussian parameters on Program 1 of Figure \ref{fig:prior_densities}. The results show that increasing the allocation pool has rapidly diminishing returns, while adding more untested ideas into the pool remains valuable, with slowly diminishing marginal returns. The theory presented in Section \ref{sec:metaproduction} suggests that beyond a point, adding additional ideas to the pool does not increase program-level returns. The scale of current practice is nowhere near this breakpoint.

The results of Section \ref{sec:allocation_practice} suggest the need to generating many more ideas, given that idea quality remains the same. The right panel of Figure \ref{fig:netflix_metaproduction} demonstrates the value of doing so, with a fixed allocation pool of 250 million units. With additional information on the costs of idea generation, we may use the metaproduction function $F$ to determine the size of investment into the program, in terms of the number of untested ideas available. 

%At the scale of current practice, the marginal return to increasing the idea pool is

The theory in Section \ref{sec:metaproduction} also suggests a few insights. The first is intuitive: organizations should increase idea average quality, as Corollary \ref{cor:metaproduction_vs_mu_tau} shows. If this is difficult, an alternative is to increase the \textit{diversity} of ideas, corresponding to increasing $\tau$. Secondly, Theorem \ref{thm:metaprod_description} suggests under which conditions the lean experimentation is optimal: organizations should test all ideas in the pool if $I \leq N/x^*$ where $x^*$ is the global maximum of $f(x)/x$ where $f$ is the production function of the program.

\begin{figure*}
    \centering
    \includegraphics[width = 1.01\textwidth]{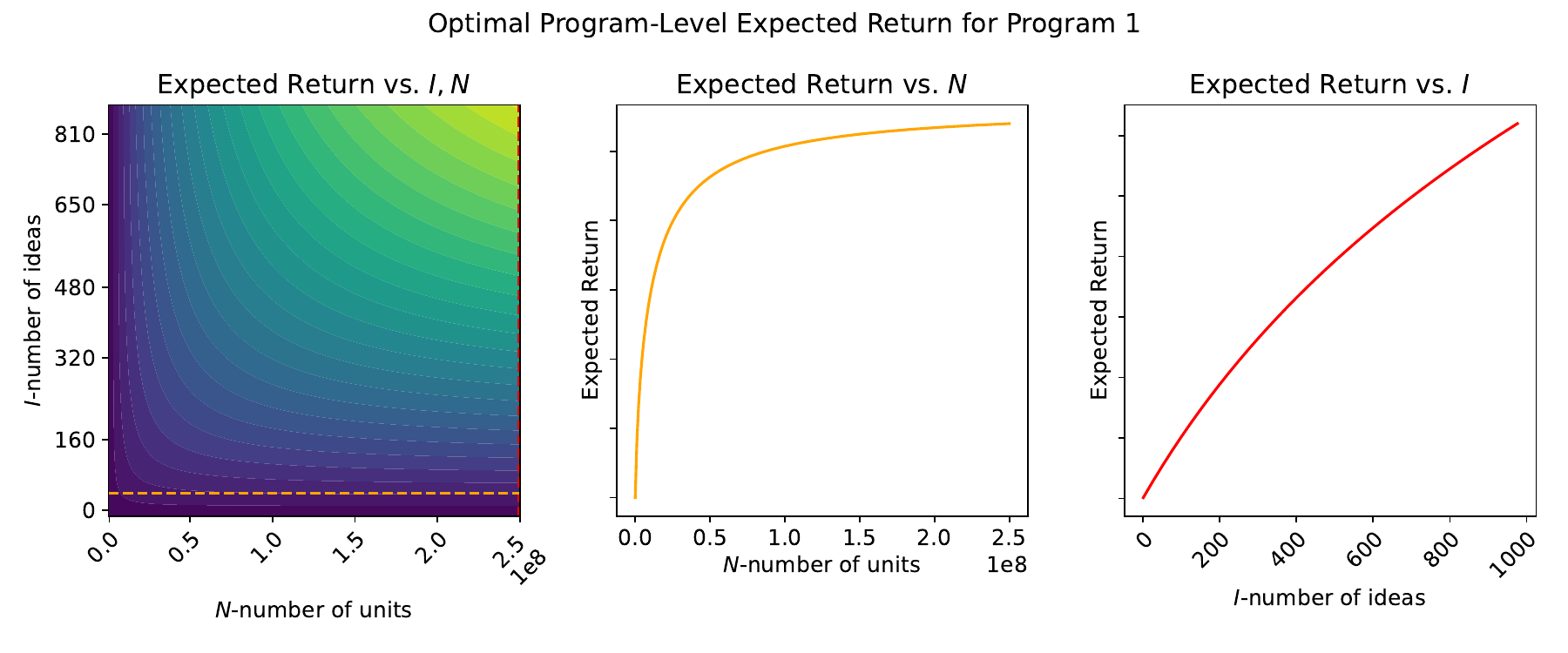}
    \caption{
    Visualization of the metaproduction function using a Gaussian prior, estimated using data from a Netflix experimentation program. Lighter colors indicate larger returns. Left: the two dimensional metaproduction function $F(I,N)$. Middle: one-dimensional projection of $F$ where $I = 50$ and $N$ increases, corresponding to the dashed line on the left plot. Right: one dimensional slice of $F$ with $N = 10^8$.}
    \label{fig:netflix_metaproduction}
\end{figure*}

\subsection{The Role of Costs}
\label{sec:costs}

Our discussion of optimizing returns has only touched on the role of costs.  If the implementation cost for idea $i$ is $s_i$ and the testing cost is $t_i(n_i)$, as a function of the allocation $n_i$, then the production function changes and is given by 
\[
\E\left[(\E[\Delta_i | \widehat{\Delta}_i; n_i] - s_i)^+ \right] - t_i(n_i).
\]
Recall that the production function measures the expected return from allocating $n_i$ units to test a typical idea $i$, using the optimal decision. Here, the optimal strategy is to ship all ideas with posterior expectation greater than the implementation cost $s_i$. The allocation problem reduces to maximizing a separable objective with a simplex constraint, so our computational techniques still apply. 

The difficulty with understanding the role of costs is that costs are typically difficult to measure. For example, there are hidden costs per test that present as accumulated complexity, when more and more treatments are incorporated into the platform. Measuring costs of human labor for collaborative projects is not straightforward. Furthermore, the metric of returns in most experiments is different from dollars, and converting between the two metrics often relies on a noisy black-box. As a result, the exact takeaway for experimentation practice depends on the firm. We suggest a suite of several exercises that firms can use to study decision procedures given their bespoke costs and risk tolerances, as follows. The exercises are illustrated on Program 1 of Figure \ref{fig:prior_densities}. See also \cite{azevedo2020b} for an excellent related discussion on data from Microsoft Bing.

We first study the dependence of $p$-value thresholds on implementation costs; we can reverse-engineer the implementation costs for which $p < 0.05$ is optimal given a linear utility and the estimated prior from Figure \ref{fig:prior_densities}. With cost $s_i$, it is optimal to ship a tested idea $i$ whenever
\[
\E[\Delta_i | \widehat{\Delta}_i] \geq  s_i.
\]
By the same arguments as Prop. \ref{prop:pvalue_framework_equivalence}, there is a unique value of $\widehat{\Delta}_i$ above which it is optimal to ship. Figure \ref{fig:backing_out_costs} shows the implied $p$-value threshold as a function of $s_i$. The result suggests that the $p = 0.05$ decision threshold is optimal with implementation costs of about 9 times the magnitude of $|\mu|$. Whether this cost is significant or insignificant greatly depends on the scale of $|\mu|$ and other parameters specific to the experimentation program. Interpreting this cost should be done in a firm-specific manner. However, rough conversions between this metric and dollars in the Netflix context suggests an implementation cost equivalent to a year's worth of work for several engineers or data scientists. This is realistic for some ideas, but not others.

\begin{figure}[htbp]
    \centering
    \begin{subfigure}{0.49\linewidth}
        \centering
        \includegraphics[width=\linewidth]{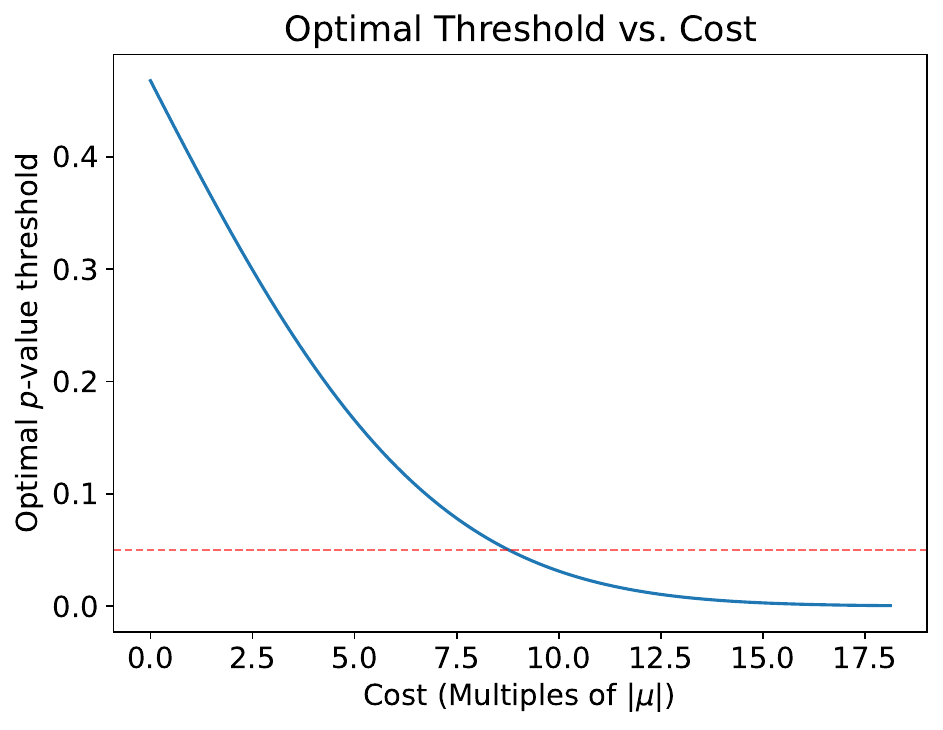}
        \caption{}
        \label{fig:backing_out_costs}
    \end{subfigure}
    \hfill
    \begin{subfigure}{0.49\linewidth}
        \centering
        \includegraphics[width = \linewidth]{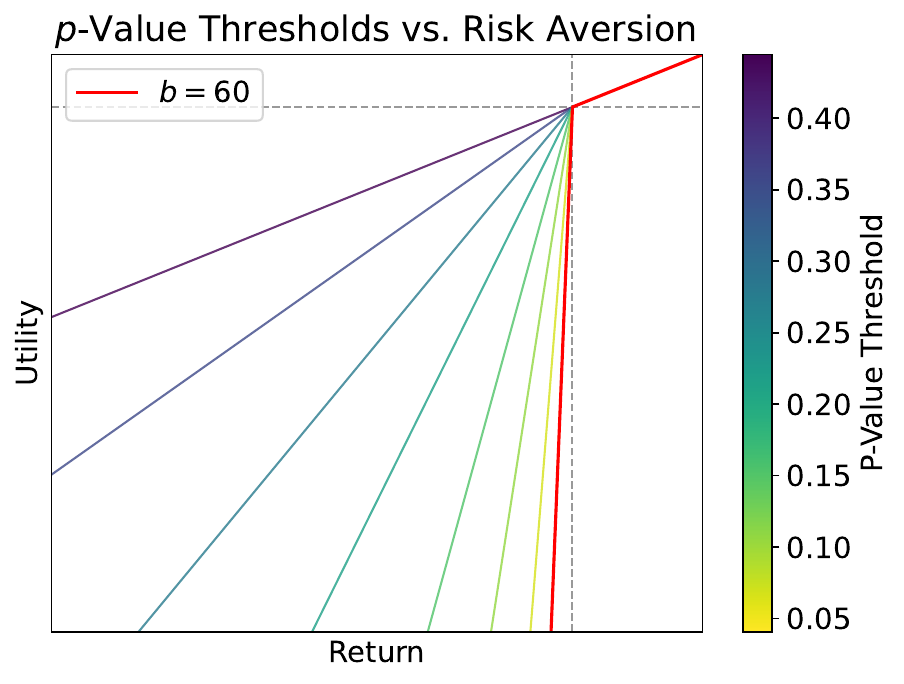}
        \caption{}
        \label{fig:backing_out_utilities}
    \end{subfigure}
    \caption{Left: Optimal $p$-value threshold as a function of costs $c_i$ for linear utility and estimated $G$ from Figure \ref{fig:prior_densities}. $x$-axis is scaled by the size of the prior mean. The red-dashed line indicates a $p = 0.05$. Right: Various risk averse utilities of the form $u(x) = x + bx\mathbf{1}\set{x < 0}, b \geq 0$ for various values of $b$, colored by the optimal $p$-value decision threshold. Red line denotes the utility function with optimal decision $0.05$, corresponding to $b \approx 60$. Gaussian prior $G$ from Program 1 of Fig. \ref{fig:prior_densities} was used, with $n = 5$ million.}
\end{figure}

Along similar lines, we can visualize several utility functions for which $p < 0.05$ is optimal. Consider $u(x) = x + bx\mathbf{1}\set{x < 0}, b \geq 0.$ $b = 0$ recovers the linear utility we have primarily focused on. The optimal decision given this utility is to ship all tests for which 
\begin{equation}
\label{eq:risk_averse_utility_decision}
    \E[\Delta_i + b \Delta_i \mathbf{1}\set{\Delta_i < 0} | \widehat{\Delta}_i = c] \geq 0
\end{equation}
By normal conjugacy, $\Delta_i | \widehat{\Delta}_i \sim \sf{N}(m,s^2)$ with posterior mean and variance given in Example \ref{ex:normalprior_linearutility}. Using integration by parts, \eqref{eq:risk_averse_utility_decision} reduces to solving for the value of $c$ which satisfies
\[
m\left( 1 + b\Phi\left(-\frac{m}{s}\right)\right) - bs\phi\left(\frac{m}{s}\right) = 0.
\]

Figure \ref{fig:backing_out_utilities} displays the resulting optimal $p$-value thresholds for various risk averse utilities with different choices of $b$, for Program 1 of Fig. \ref{fig:prior_densities}. The results suggest that a $p < 0.05$ decision is optimal for a risk-averse utility which weights losses about \textit{60 times more} than gains. The exact multiple varies depending on the properties of the prior distribution. When repeated for the other priors in Figure \ref{fig:prior_densities}, the implied risk-aversion still weights losses much more heavily than gains.

Finally, Figure \ref{fig:value_of_testing_with_costs} shows the impact of fixed costs per test on the optimal allocation in various scenarios. The same curve as in Figure \ref{fig:value_of_testing} is pictured in blue, while the other curves show the cumulative returns to testing $I$ ideas, given a fixed testing cost $t_i(n_i) = c$. We test four settings, corresponding to no cost, an unspecified economically significant cost $c$, and multiples $5c, 10c$. In each of these settings, the optimal allocation is several times larger or even orders of magnitude larger than current practice at Netflix.

\begin{figure}
    \centering
    \includegraphics[width=0.7\linewidth]{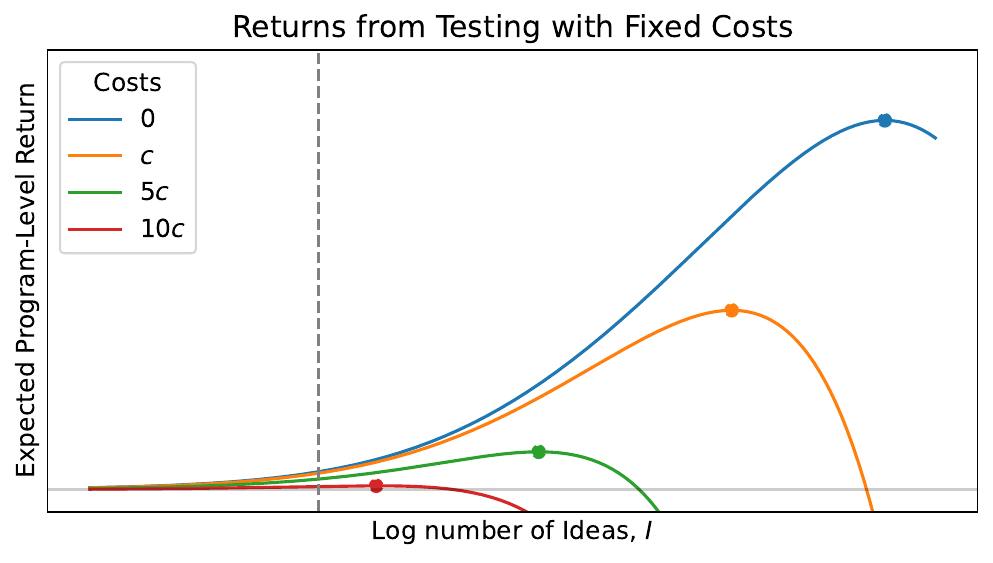}
    \caption{Expected return to testing $I$ ideas with equally sized allocations, using an allocation pool of 250 million units and prior parameters estimated from Program 1 of Fig. \ref{fig:prior_densities}, with per test costs. $x$-axis scale is hidden but spans 4 orders of magnitude. Grey dashed line denotes the number of ideas tested currently in practice.}
    \label{fig:value_of_testing_with_costs}
\end{figure}

The results of these exercises will result in different takeaways per firm. However, it is clear that when framing experimentation in the return-aware framework, one should both measure and incorporate costs into any practical analysis. Further research should be conducted on the sensitivity of these results to costs.

We also highlight the need for better infrastructure to measure costs and drive them down as much as possible. The current state of experimentation in online platforms generally suggests that costs to testing are rapidly decreasing.  At many online technology firms, there are centralized teams -- \textit{experimentation platforms} -- whose explicit mandate is to reduce the friction of running A/B tests; for smaller organizations without such teams, there are external vendors who provide experimentation services. One ultimate goal of these efforts is to bring experimentation into the costless regime, where the return-aware framework suggests significant changes to current practice. For example, in experimentation platforms, a common business question is to understand the value of investing a large capital investment to decrease fixed costs for future experiments. The return-aware framework can be used to quantify tradeoffs in case studies like these.

% flesh out costs point - @ S E 

\subsection{Managing Multiple Experimentation Programs}
\label{sec:multiple_programs}

At many firms, experimentation is conducted in multiple areas which have different priors. Multiple experimentation programs may share constrained resources, including allocation or R\&D resources.
Addressing Question 4, we discuss an extension of the A/B testing problem to answer how allocation should be divided amongst the programs, in order to optimize returns. Answering this question provides useful insight for making investment decisions across multiple experimentation areas.

Suppose that there are $P$ experimentation programs indexed by $p=1,\dots,P$, each with $I_p$ tests, indexed by $i=1,\dots,I_p$.  Assume that test returns are drawn from distributions $G_p$ so that 
\begin{align*}
\Delta_{i,p} & \stackrel{\textsf{i.i.d.}}{\sim} G_p \\
\hat{\Delta}_{i,p} & \sim \sf{N}\left(\Delta_{i,p},\frac{\sigma^2_i}{n_{i,p}} \right).
\end{align*}

The return-aware framework is the same, except unit allocation to each program comes from a shared pool of $N$ units. This variant of the framework can be analyzed in the same way, resulting in the optimization problem 
\begin{align*}
    & \text{maximize}_{n_{i,p}} \sum_{p=1}^{P} \sum_{i=1}^{I_p} \mathbb{E}\left[\mathbb{E}\left[ \Delta_{i,p} | \hat{\Delta}_{i,p};n_{i,p} \right]^+ \right] \\
    & \text{subject to } n_{i,p} \geq 0, \sum_p\sum_{i=1}^{I_p} n_{i,p} = N.
\end{align*}
Understanding the solution of this problem requires us to calculate the maximum value of the previous subproblem for different values of $N.$ Let $N_p = \sum_{i=1}^{I_p} n_{i,p}$ be the number of individuals allocated to program $p$. Then the optimization problem above may be written as 
\begin{equation}
\label{eq:multiple_program_opt}
\begin{split}    
    & \text{maximize}_{N_p} \quad \sum_p^P F_p(N_p)\\
    & \text{subject to } N_{p} \geq 0, \sum_p N_p = N.
\end{split}
\end{equation}

where $F_p(N_p)$ is the solution of the A/B Testing Problem for program $p$, with prior $G_p$ and maximum allocation $N_p$. We may compute every value of $F_p(N_p)$ using the dynamic programming solution to the base A/B testing problem. Letting $\cl{M}(a,b)$ denote the solution to the problem above with $a$ programs and $b$ total units, dynamic programming can be used to solve \eqref{eq:multiple_program_opt}, with the update
\[
\cl{M}(a,b) = \max_{j=0}^b \left(\cl{M}(a-1,b-j) + F_a(j) \right)
\]
and base case
\begin{align*}
\cl{M}(1,b) & = F_1(b) \\
\cl{M}(a,1) & = \max\left(\cl{M}(a-1,1), F_a(1) + \sum_{p=1}^{a-1} F_p(0)\right).
\end{align*}

Figure \ref{fig:meta_production_byprogram} shows the metaproduction functions for the same three different Netflix experimentation programs in Figure \ref{fig:prior_densities}. The number of ideas for each program roughly reflects the relative sizes of each program in past data. The shape of the metaproduction functions immediately shows that we should not expect the optimal allocation to place all mass on one program. Solving the resulting dynamic programming gave an allocation of $[112, \  89, \  49]$ million units to the three experimentation programs respectively.

\paragraph*{Shared R\&D Resources.} A similar approach can answer questions on managing experimentation programs which share resources for generating ideas. Suppose we have $P$ different experimentation programs, which do not share allocation pools; within each program, tests have disjoint allocations. Let each program $p$ have access to an allocation pool of size $N_p$. Suppose further that the cost to create ideas is the same across programs, and there is budget to create $I$ additional ideas across all $P$ programs. 

The problem is to choose the number of ideas $I_p$ for each program $p$, the allocation $(n_{i,p})_{i=1,\dots,I_p}$ to test ideas for program $p$, and the subset $S_p$ of ideas to ship, in order to maximize cumulative return from all experimentation programs. This yields the optimization problem:

\begin{align*}
    & \text{maximize} \quad \E\left[\sum_{p=1}^{P} \sum_{i \in S_p} \Delta_i \right]\\
    & \text{subject to } \sum_{p=1}^P I_p \leq I.
\end{align*}

The solution of this problem shares the dynamic programming structure of the heterogeneous A/B testing problem. Let $F_p(I,N_p)$ be the optimal value of the base A/B testing problem with $I$ ideas to test and $N_p$ units. Then by similar arguments to the previous case, the optimization problem may be written as 
\begin{align}
\label{eq:resource_optimization}
\begin{split}
    & \text{maximize}_{I_p} \ \sum_{p=1}^P F_p(I_p, N_p)\\
    & \text{subject to } I_p \geq 0, \sum_{p=1}^P I_p \leq I.
\end{split}
\end{align}
It is also possible to weight each program differently in the objective, which may be useful if the programs target different metrics. Again, dynamic programming may be used to solve this. Fixing the total number of units $N$, let $\cl{M}(I,P)$ be the maximal value of the program \eqref{eq:resource_optimization}. By casework on the value of $I_p,$ the maximal value satisfies the equality
\[
\cl{M}(I,P) = \max_{j=0}^I \left( \cl{M}(I-j, P-1) + F_p(I, N_p)\right).
\]
The optimal allocation can be rediscovered by keeping track of the value $j$ which maximizes the quantity above. Figure \ref{fig:idea_meta_production_byprogram} shows the solution of the resource allocation problem for the same three programs as Figure \ref{fig:prior_densities}. Amongst three programs, the allocation is $[30,20,0]$. Although the solution suggests zero allocation should be given to Program $3$, the results would change if the allocation pool for the program were much larger, or if the prior were to change.

% An endless number of variants may be spun off of this problem. For example, we may have $P$ different experimentation programs which share both member allocation and the same resources for idea generation. A similar dynamic programming approach may be written out for this problem, which keeps track of three variables.

% \begin{align}
% \label{eq:program_allocation_idea_shared}
% \begin{split}
%      \text{maximize}_{I_p, N_p} \ & \sum_{p=1}^P F_p(I_p,N_p)\\
%      \text{subject to } & I_p \geq 0, \sum_{p=1}^P I_p \leq I \\
%                        & N_p \geq 0, \sum_{p=1}^P N_p \leq N \\
% \end{split}
% \end{align}
% Let $\cl{M}(P,I,N)$ denote the optimal value of \eqref{eq:program_allocation_idea_shared}. The dynamic program can be solved by keeping track of three variables and using the recursion
% \[
% \cl{M}(P,I,N) = \max_{i \leq I, n \leq N} \left( \cl{M}(P,I-i,N-n) + F_p(i,n) \right).
% \]

\begin{figure}
    \centering
    \begin{subfigure}{0.49\linewidth}
        \centering
        \includegraphics[width=\linewidth]{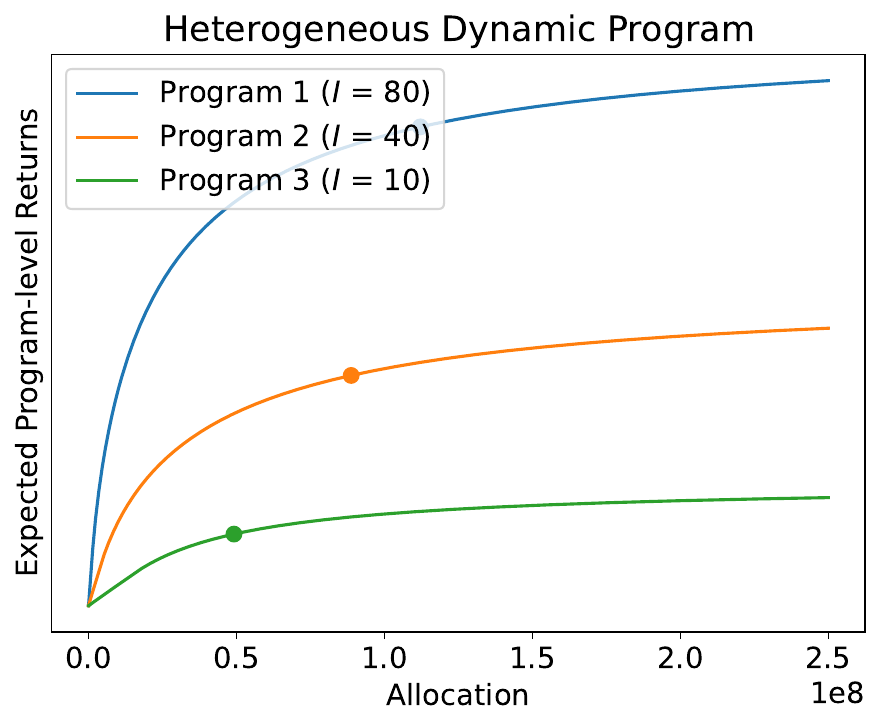}
        \caption{}
        \label{fig:meta_production_byprogram}
    \end{subfigure}
    \hfill
    \begin{subfigure}{0.49\linewidth}
        \centering
        \includegraphics[width=\linewidth]{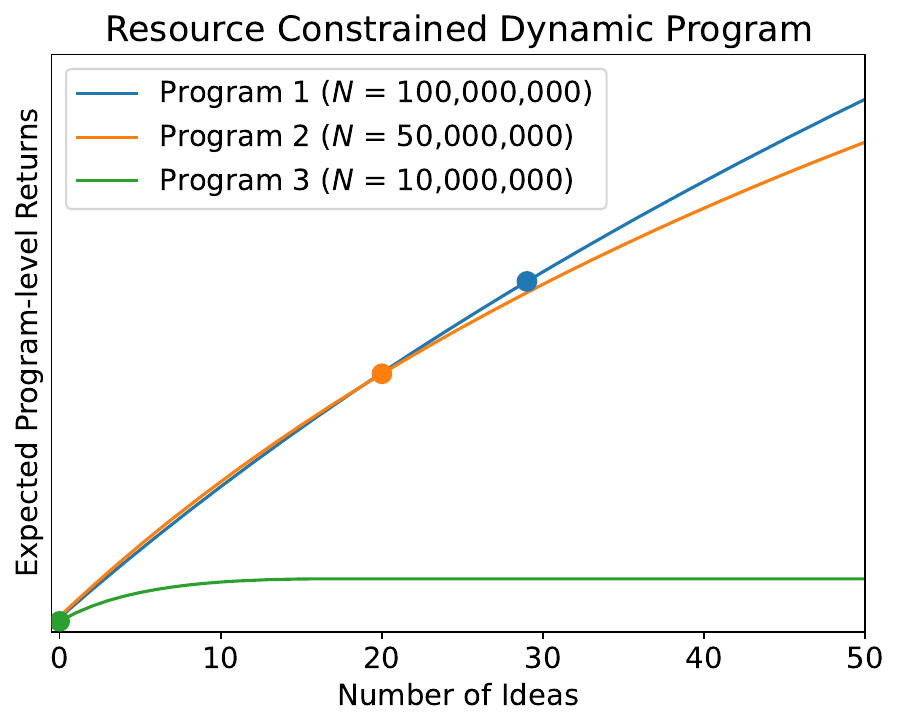}
        \caption{}
        \label{fig:idea_meta_production_byprogram}
    \end{subfigure}
    \caption{Left: Expected program-level return from three different experimentation programs. $y$-scale is obfuscated, and dots signify the optimal allocation to each program. The slopes of the curves at these points are all equal. Curves will change as the number of ideas change. Right: Additional return from three different experimentation programs, as a function of the number of ideas available for testing. Dots signify the optimal idea allocation to each program. Curves will change as the number of available units for testing changes. The allocation pools for each program are fixed and indicated in the legend.}
\end{figure}

\section{Beyond the Base Framework}

The goal of optimizing returns can be formalized by models other than the A/B Testing Problem in Section \ref{sec:setup}. For example, experimentation programs may enroll units in multiple tests at once, or may select at most one idea to be implemented, if the treatments are mutually incompatible. In another direction, a new experimentation program may not have enough data to estimate a prior $G$, raising the need for alternative decision procedures. In this section, we explore the resulting optimization problems and decision theory for these settings. 

\subsection{Constraints in A/B Testing Problems}
\label{sec:real_constraints}

Returning to the constraint $\sum_i n_i = N$, a more realistic constraint in many experimentation programs is to allow for units to be enrolled in multiple tests at once. Assuming that units are exchangeable for every test in an experimentation program and every unit can be enrolled at most $k$ times, one can solve the A/B Testing Problem by imposing the constraints $\sum_i n_i = kN, n_i \leq N$. Whenever these constraints are satisfied, there is an allocation of units to tests such that every unit is enrolled in at most $k$ tests,  each test has $k$ units, and units cannot be enrolled multiple times in the same test. The dynamic program can be solved in a similar fashion using now the update
\[
\cl{M}(i,n) = \max_{j=0}^{\min(n,N)} \left( \cl{M}(i-1, n-j) + f(j) \right)
\]
and evaluating $\cl{M}(I,kN).$

In practice, multiple experimentation programs share some allocation pools but not others, suggesting a different set of constraints in the problem. Suppose we have the following data:
\begin{enumerate}
    \item $k = 1,\dots,K$ unit allocation pools, of size $N_k$ each.
    \item $p = 1,\dots,P$ experimentation programs, with $i=1,\dots,I_p$ ideas each; the allocation for idea $i$ in program $p$ is given by $n^{(p)}_i$. Let $n_{i,k}^{(p)}$ be the number of units assigned to test idea $i$ in program $p$, from unit pool $k$.
    \item For each program $p,$ there is a set of indices $E_p \subseteq [K]$ keeping track of excluded unit pools.
    \item Assume in unit pool $k$, units can be enrolled in at most $c_k$ tests.
\end{enumerate}
Let $f_p(n)$ be the production function for the $p$th program. Maximizing returns subject to the constraints above can be formulated as the optimization problem

\begin{align}
\label{eq:general_opt}
\begin{split}
     \text{maximize} & \sum_{p=1}^P \sum_{i=1}^{I_p} f_p(n_i^{(p)})\\
     \text{subject to } & \sum_{p=1}^P \sum_{i=1}^{I_p} n_{i,k}^{(p)} \leq c_kN_k \\
     & n_i^{(p)} = \sum_{k=1}^K n_{i,k}^{(p)} \\
    & n_i^{(p)} \geq 0, 0 \leq n_{i,k}^{(p)} \leq N_k\\
    & n_{i,k}^{(p)} = 0, \forall i, \forall k \in E_p.\\
\end{split}
\end{align}

Under a Gaussian prior, let $\hat{x}_p$ be the threshold above which $f_p$ is concave. If we replace the constraint $n_i^{(p)} \geq 0$ by $n_i^{(p)} \geq \hat{x}_p$, the optimization problem \eqref{eq:general_opt} becomes a convex optimization problem. If $\hat{x}_p$ is small enough, the convex optimization proxy is a good approximation to problem \eqref{eq:general_opt}; one can round down $n_i^{(p)}$ to zero to avoid running tests with a very small number of units, for operational reasons.

\begin{remark}
The optimization problem is easy to implement in \textsf{cvxpy}, even with many constraints. There are also various ways to extend this insight.
\begin{enumerate}
    \item The same optimization problem can be used for a single experimentation program where ideas in the same program have different allocation pools.
    \item The optimization problem can be done whenever $f_p$ is concave; the results of \cite{azevedo2020b} suggests that when the prior is $t$-distributed with degrees of freedom less than $3$, the production function is concave. It also suggests we may optimize an objective consisting of a generic utility function $u$ applied to the sum of production functions, so long as $u$ is concave.
\end{enumerate}
\end{remark}

\subsection{Mutually Exclusive Treatments}

\label{sec:mutually_exclusive_treatments}

Consider a mutually exclusive version of the A/B testing problem, where only one test may be shipped amongst the pool of tested ideas.
\cite{azevedo2020b} study this problem as Extension 2 of their framework, in the Supplemental Information G.2 section of their paper. The optimal test $s$ to ship is given by

\[
\argmax_{i=1,\dots,I} \; \E[\Delta_i | \widehat{\Delta}_i]
\]

if the largest posterior expectation is positive, which is recorded in SI G.2 of \cite{azevedo2020b}. The mutually exclusive A/B testing problem with this objective reduces to the optimization problem.
\begin{align}
\label{eq:one_treatment_optimization}
\begin{split}
    \text{maximize} & \quad \E\left[\max_{i=1,\dots,I} \E[\Delta_i | \widehat{\mathbf{\Delta}}; n_i]^+ \right] \\
    \text{subject to } & \sum_i n_i = N.
\end{split}
\end{align}
By considering symmetric solutions which equally allocate across some number of tests, it suffices to choose $I_0 \leq I$ which maximizes
\begin{equation}
\label{eq:mut_production}
\E\left[\max_{i=1,\dots,I_0} \E[\Delta_i | \widehat{\mathbf{\Delta}}; N/I_0]^+ \right].
\end{equation}

Figure \ref{fig:mutually_exclusive_results} shows a plot of \eqref{eq:mut_production} as a function of $I_0$ for various values of $N$. In this case, the optimal number of tests to run is much smaller than in Section \ref{sec:allocation_practice}.

The expression is difficult to compute analytically, but we may provide useful approximations. Notice that $\max_{i=1,\dots,I_0} \E[\Delta_i | \widehat{\mathbf{\Delta}}; N/I_0]$ is equal to
\[
\frac{\frac{\sigma^2 I_0}{N}}{\tau^2 + \frac{\sigma^2 I_0}{N}}\mu + \frac{\tau^2}{\tau^2 + \frac{\sigma^2 I_0}{N}} \max_{i=1,\dots,I_0} \widehat{\Delta}_i.
\]
\sloppy By well-known arguments \cite{chatterjee2014superconcentration} for the concentration of maxima of Gaussian random variables, this suggests that $\max_{i=1,\dots,I_0} \E[\Delta_i | \widehat{\mathbf{\Delta}}; N/I_0]$ is exponentially concentrated around 
\begin{equation}
\label{eq:mutually_exclusive_mean}
\mu + \sqrt{\frac{2\log I_0}{\tau^2 + \frac{\sigma^2 I_0}{N}}}    
\end{equation}
with fluctuations of order $1/\sqrt{2\left(\tau^2 + \frac{\sigma^2 I_0}{N}\right)\log I_0}$. Whenever the mean \eqref{eq:mutually_exclusive_mean} is larger than several multiples of the fluctuation, we should expect that Eq. \eqref{eq:mut_production} is approximately equal to
\[
\E\left[\max_{i=1,\dots,I_0} \E[\Delta_i | \widehat{\mathbf{\Delta}}; N/I_0] \right] = \mu + \sqrt{\frac{2\log I_0}{\tau^2 + \frac{\sigma^2 I_0}{N}}}.
\]
The approximation \eqref{eq:mutually_exclusive_mean} can be used to inform the optimum value of $I_0 \leq I$ which maximizes expected return. Optimizing this in $I_0$, we obtain an approximation to the solution of the mutually exclusive A/B testing problem.

%call this function $F_{E}(I,N)$. As in Section \ref{sec:metaproduction}, we can investigate the dependence on $I,N$, numerically. The function $F_E$ can also be used to solve dynamic programs for scheduling problems or idea and unit allocation problems as in Section \ref{sec:ab_testing_problems}. These case studies have more immediate implications for various experimentation programs in practice, but are theoretically more difficult to understand.

\begin{figure}
    \centering
    \includegraphics[width = 0.6\linewidth]{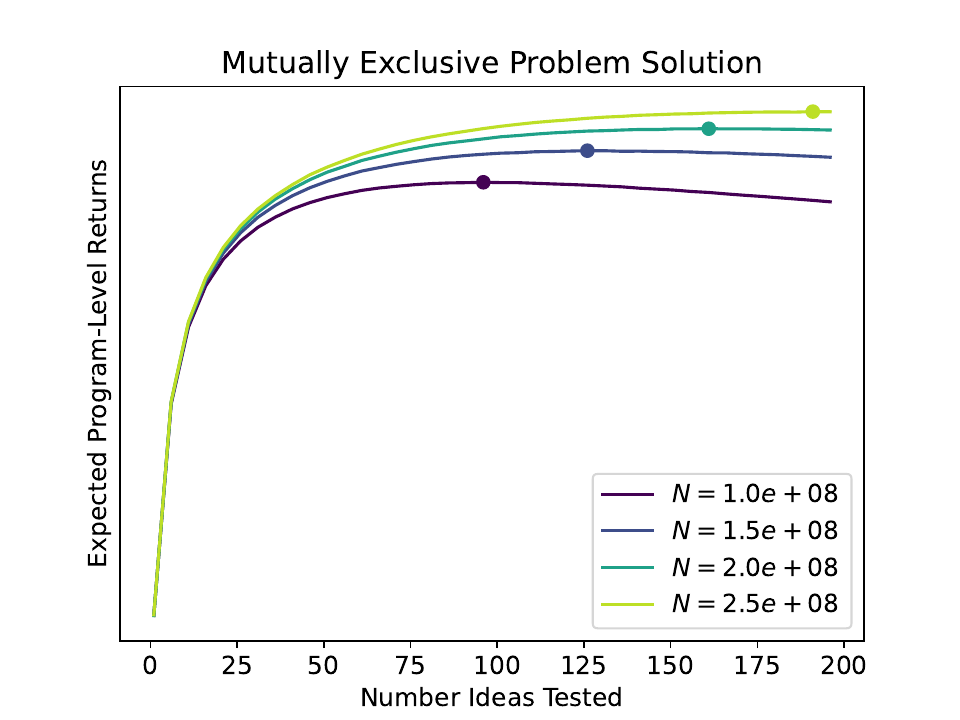}
    \caption{Expected metric value as a function of number of ideas tested. Curves denote different sizes of the allocation pool, and are generated with Monte Carlo. The optimal number of ideas to test varies significantly, and the conclusion may be different than lean experimentation. The concave shape of the curves may be explained intuitively as follows. Testing too little misses out on many very good ideas, while testing too much runs into winner's curse, as the test results become more noisy. } 
    \label{fig:mutually_exclusive_results}
\end{figure}

\subsection{Decision Problems for New Experimentation Programs}
\label{sec:new_exp_programs}

For newly launched experimentation programs, there is little data to estimate a prior $G.$ To navigate around issues of choosing a subjective prior, the goal of maximizing welfare $\sum_{i \in S} \Delta_i$ can be recast as finding the minimax decision for the loss
\[
\sum_{i} \Delta_i^+ - \sum_{i \in S} \Delta_i.
\]
Suppose for each $i$ we observe $X_i \sim \sf{N}(\Delta_i, \sigma^2/n_i)$ with fixed $n_i.$ The minimax decision, perhaps surprisingly, is to ship all $X_i > 0,$ which is a much less conservative decision than $p < 0.05$. The resulting decision is similar to the conclusions found in Section \ref{sec:implications}.

\begin{proposition}
\label{prop:minimax_optimal_multiple_tests}
The decision rule of shipping all $i$ for which $X_i \geq 0$ is minimax optimal.
\end{proposition}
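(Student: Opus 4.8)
The plan is to exploit the additive, test-by-test structure of the regret and reduce the $I$-test problem to $I$ copies of a single-test minimax problem, which I then solve by exhibiting an explicit least-favorable prior.

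First I would rewrite the loss in separable form. Setting $a_i=\mathbf 1\set{i\in S}$ and using $\Delta=\Delta^+-\Delta^-$ (so $\Delta^+-\Delta=\Delta^-$, the negative part), the loss becomes $\sum_i\big[(1-a_i)\Delta_i^+ + a_i\Delta_i^-\big]$. Hence the cost attached to test $i$ is $\Delta_i^+$ if it is not shipped and $\Delta_i^-$ if it is shipped; each summand depends only on $(a_i,\Delta_i)$, and since the $X_i$ are independent with $X_i$ informative only about $\Delta_i$, the decision problem factorizes across tests.

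Next I would analyze a single test. Writing $\nu=\sigma/\sqrt n$, the rule $a^\ast=\mathbf1\set{X\ge0}$ ships with probability $\Phi(\Delta/\nu)$, so its frequentist risk is
\[
R(a^\ast,\Delta)=|\Delta|\,\Phi\!\left(-|\Delta|/\nu\right),
\]
which is continuous, symmetric in $\Delta$, vanishes at $\Delta=0$ and as $|\Delta|\to\infty$, and therefore attains a finite maximum $V$ at some $\pm t^\ast$ with $t^\ast\in(0,\infty)$. To certify minimaxity I would use the symmetric two-point prior $\pi_t$ placing mass $1/2$ on each of $t$ and $-t$: the posterior log-odds of $+t$ versus $-t$ at $X=x$ equal $2xt/\nu^2$, so the posterior mean changes sign exactly at $x=0$, and since the Bayes ship decision is to ship iff $\E[\Delta\mid x]\ge0$, the rule $a^\ast$ is Bayes against $\pi_t$ for \emph{every} $t$. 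By symmetry its Bayes risk under $\pi_t$ equals $R(a^\ast,t)$. Taking $t=t^\ast$ gives a Bayes rule whose maximum risk $V$ equals its Bayes risk $R(a^\ast,t^\ast)$, and the standard lemma (for any rule $a$, $\sup_\Delta R(a,\Delta)\ge\int R(a,\Delta)\,d\pi_{t^\ast}\ge R(a^\ast,t^\ast)$) shows $a^\ast$ is single-test minimax.

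Finally I would lift this to the product. Against the product prior $\prod_i\pi_{t^\ast}$ the posterior factorizes, so the joint Bayes rule decides each test by its own zero-threshold, namely $\delta^\ast=(\mathbf1\set{X_i\ge0})_i$, with Bayes risk $\sum_i V$; since the separable risk of $\delta^\ast$ also has maximum $\sum_i V$, the same lemma yields minimaxity of $\delta^\ast$. The crux of the argument --- and the step I expect to require the most care --- is the coincidence underlying the least-favorable prior: verifying that the zero-threshold rule is Bayes against every symmetric two-point prior (so that the choice of $t$ is free) while its worst-case risk is both attained and matched by the Bayes risk at the optimal $t^\ast$. Once this is established, the product case is immediate.
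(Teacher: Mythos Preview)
Your proposal is correct and follows essentially the same approach as the paper: both arguments exhibit a symmetric two-point prior $\pm t$ (the paper's $\pm C_i$) as least-favorable, verify that the zero-threshold rule is Bayes against it, and then match its Bayes risk to the maximum risk of the rule via the standard lower-bound lemma. The only cosmetic difference is organizational---you solve the single-test minimax problem first and then lift to the product, whereas the paper works directly with the product prior $\prod_i \pi_{\pm C_i}$ and invokes independence to factorize the Bayes rule; in the product step you should index $t^\ast$ and $V$ by $i$ since the $n_i$ (and hence $\nu_i$) may differ across tests, but this is a trivial notational fix.
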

In the single test case, this result is well-known in the statistical treatment literature in economics; see \cite{manski2004statistical, stoye2009minimax}, and further references in \cite{manski2019treatment}, which focuses on minimax regret problems. The result extends simply to multiple observations, with different variances, for which we provide a simple proof. See also \cite{chiong2023minimax, joo2023getting} for a more refined discussion of minimax decisions in A/B testing.

For a single test, Section 4 of \cite{azevedo2023b} derives the same expression for minimax risk. The expression is used to size a single experiment given external costs to enrolling members. We show how the same expression suggests a strategy for optimizing returns from a portfolio of A/B tests, suggesting the lean experimentation is optimal from a minimax perspective.

\begin{lemma}
\label{lemma:minimax_allocation}
Without costs to assigning members to experiments, the allocation which reduces the minimax risk is to allocate equally to all ideas.
\end{lemma}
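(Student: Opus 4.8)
The plan is to compute the minimax risk as an explicit function of the allocation $(n_i)$ and then minimize it under the simplex constraint $\sum_i n_i = N$. By Proposition \ref{prop:minimax_optimal_multiple_tests}, for any fixed allocation the minimax-optimal decision is to ship exactly those tests with $X_i \geq 0$, so the minimax risk equals the worst-case expected loss of this single rule. First I would write the expected loss as a sum of per-test contributions: because the $X_i$ are independent and both the loss $\sum_i \Delta_i^+ - \sum_{i \in S}\Delta_i$ and the decision rule factorize across $i$, the expected loss is $\sum_i R(\Delta_i, n_i)$ where $R(\Delta_i,n_i) = \E[\Delta_i^+ - \Delta_i \mathbf{1}\set{X_i \geq 0}]$ depends only on its own coordinate.

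Next I would evaluate $R$ in closed form. Writing $\sigma_i = \sigma/\sqrt{n_i}$ and using $\Prob(X_i \geq 0) = \Phi(\Delta_i/\sigma_i)$, a short calculation splitting on the sign of $\Delta_i$ gives the clean expression
\[
R(\Delta_i, n_i) = |\Delta_i|\,\Phi\!\left(-\frac{|\Delta_i|\sqrt{n_i}}{\sigma}\right).
\]
Substituting $t = |\Delta_i|\sqrt{n_i}/\sigma \geq 0$ turns this into $R = (\sigma/\sqrt{n_i})\, t\,\Phi(-t)$, so the worst case over $\Delta_i$ is $\sup_{\Delta_i} R(\Delta_i,n_i) = (\sigma/\sqrt{n_i})\, C$, where $C := \max_{t \geq 0} t\,\Phi(-t)$ is a universal constant attained at the maximizer $t^*$ solving $\Phi(-t) = t\phi(t)$. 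Since the objective separates across tests with no coupling between the $\Delta_i$, the joint worst case is the sum of the coordinatewise worst cases, and the minimax risk as a function of the allocation is $\sigma C \sum_i n_i^{-1/2}$.

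Finally I would minimize $\sum_i n_i^{-1/2}$ subject to $n_i \geq 0$ and $\sum_i n_i = N$. The map $n \mapsto n^{-1/2}$ is strictly convex on $(0,\infty)$ (its second derivative $\tfrac34 n^{-5/2}$ is positive), so by Jensen's inequality — equivalently by a Lagrange-multiplier argument, where the stationarity condition $h'(n_i) = \lambda$ forces all $n_i$ equal since $h'(n) = -\tfrac12 n^{-3/2}$ is strictly monotone — the sum is uniquely minimized at $n_i = N/I$. Because $n^{-1/2} \to \infty$ as $n \downarrow 0$, no corner solution can be optimal, so the minimizer is interior, confirming that all ideas should be tested and tested equally. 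The one step requiring care is the reduction from the minimax problem to minimizing $\sum_i n_i^{-1/2}$: I must justify that the adversarial $\Delta$ may be chosen coordinatewise (immediate from separability of the expected loss and independence of the $X_i$) and that Proposition \ref{prop:minimax_optimal_multiple_tests} certifies the rule $X_i \geq 0$ as minimax-optimal for every fixed allocation, so that the game value for each allocation is exactly this worst-case risk rather than merely an upper bound on it.
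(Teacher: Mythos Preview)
Your proposal is correct and follows essentially the same approach as the paper: both invoke Proposition~\ref{prop:minimax_optimal_multiple_tests} to reduce to the worst-case risk of the ship-when-$X_i\geq 0$ rule, both use the substitution $t=|\Delta_i|\sqrt{n_i}/\sigma$ (equivalently the paper's $\nu=\mu\sqrt{n}/\sigma$) to show the per-test minimax risk is $C\sigma/\sqrt{n_i}$ for a universal constant $C=\max_{t\geq 0} t\,\Phi(-t)$, and both conclude by convexity of $n\mapsto n^{-1/2}$ on the simplex. Your write-up is slightly more explicit about the separability argument and the boundary behavior at $n_i\downarrow 0$, but the mathematical content is the same.
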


An empirical Bayesian approach is also natural here, as in \cite{guo2020empirical}. If we are willing to assume $\Delta_i \sim G$ for an unknown prior which cannot be estimated from past data, the problem naturally becomes an empirical Bayes problem. Empirical Bayes for welfare maximization does not seem to have been studied explicitly, although some results on the nonparametric maximum likelihood procedure (NPMLE) can be applied; see Equation 3.7 of \cite{chen2022empirical}. It is unclear whether there is a matching regret lower bound and whether the plug-in NPMLE approach is optimal. We raise this as an interesting direction for future research.

% We highlight further extensions and questions that arise under this framework. A first simple extension is to accommodate heterogeneous variances. In the Netflix data, this assumption is a reasonable first step. We may instead assume that $\hat{\Delta}_i \sim \mathsf{N}\left( \Delta_i, \frac{\sigma_i^2}{n_i}\right)$. To model the variation in $\sigma_i$, consider the following setup.
% \begin{align*}
%     & (\Delta_i, \sigma_i) \sim G \\
%     & \hat{\Delta}_i \sim \mathsf{N}\left( \Delta_i, \frac{\sigma_i^2}{n_i}\right)
% \end{align*} 
% One first thing to try the explicitly solvable conjugate prior case. A similar empirical Bayes strategy is considered in \cite{chen2022empirical}, using location-scale families to model the prior. See also \cite{ignatiadis_sen_2023empirical} for an empirical partially Bayes approach to a similar problem.

\section{Discussion}
\label{sec:discussion}

Given a prior on treatment effects and the mandate to optimize returns, organizations conducting experimentation are typically better served by a return-aware framework rather than null hypothesis statistical testing. As a consequence -- with no costs --  organizations should be testing many more ideas, lowering the barriers to testing as much as possible, and reconsidering decision thresholds for shipping tests.

There are several reasons why the $p < 0.05$ paradigm remains entrenched in digital platforms. An appropriate combination of inertia, loss aversion, costs, and prior distribution could make conservative thresholds, in fact, optimal. However, running the exercises outlined in Section \ref{sec:costs} for Netflix experimentation programs, we find that individually varying each of these factors suggests either an unrealistic prior, an unusual degree of risk-aversion, or overly large implied costs. Although the results will change per firm, in our experience, many firms do not justify current practice with informed cost measurements or decision theoretic considerations. They rather default to statistical tradition.

One widespread argument for conservative decision making is the cost of false positives in A/B tests \cite{berman2022false, kohavi2024false}.    As part of relaxing $p$-value thresholds, the return-aware framework necessarily eschews false positive rate control. Figure \ref{fig:test_passing} reflects this. It is important to note that most of these false positives have treatment effects of small magnitude. 

The primary concern around false positives, even if the magnitude is small, centers around the risk of building future ideas on false positive discoveries, which may result in a sub-optimal path of innovation relative to a counterfactual based on a true discovery. But this appears to confuse decision making and epistemology: one can imagine a simple framework where we make decisions using a return-optimization framework, but gain knowledge about the platform and population using results which are statistically significant. That is, we only invest in ideas which pass a higher bar of statistical evidence. Overall, there appears to be a lack of empirical or theoretical study for the impact of false positives on future innovation; we raise this as an interesting direction for future research. 

Of course, the return-aware framework should not always be applied. For experiments in practice which are designed to learn about the population of interest, a null hypothesis testing framework with a conservative decision threshold applies. For atypical experiments that are not plausibly drawn from $G$, one can default to a minimax framework as in Section \ref{sec:new_exp_programs} or the traditional hypothesis testing framework. 

We also emphasize the importance of choosing the right metric to optimize in an experimentation program. Optimizing short-term metrics in the online technology industry runs into the risk of creating negative externalities for users, and potential backlash. In this way, the framework pairs well with recent research on surrogates for long-term outcomes.

Finally, we note that the return-aware framework is potentially useful in other settings beyond A/B Testing. For example, experimentation teams in political campaigns desire to maximize donations or voter turnout. Economic field experiments can be used in service of maximizing social welfare. Agricultural experiments aim to maximize crop yields. In the first two settings, there is an important additional nuance in the literature, which focuses on the generalization of the sample ATE to the population ATE, as well as a focus on policy learning. See for example \cite{manski2004statistical, kitagawa2018should, athey2021policy, ben2022policy} and the references therein; some of the works focus on the observational case, which is different from our setting. See also \cite{milkman2021megastudies} for an experimentation program in behavioral science with multiple parallel experiments, where our setting may apply.

In the remainder of this section, we conclude with a list of directions for future research.

\subsection{Directions for Future Research}
\label{sec:open_problems}

\paragraph*{Estimation of $G$.} We focus primarily on the problem of optimizing future test design, having knowledge of $G$. Under a fully Bayesian specification, multi-level hierarchical models such as \cite{ejdemyr2023estimating} can be used to obtain the posterior on $G$ given past data on A/B tests. From a frequentist perspective, $G$ needs to be estimated from past data; we opt for this approach, using maximum likelihood estimation for a simple normal parametric family. \cite{azevedo2020b} uses maximum likelihood for a $t$-distributed parametric family with shape and location parameters; see \cite{azevedo2019empirical} for an overview of methods to estimate $G$.

There are several interesting statistical questions in the estimation of $G$, since the quantity of interest we primarily care about is the production function $f(n).$ Firstly, one should understand the risk of incorrectly estimating $G$ from past data, under a parametric or nonparametric model. The estimated tail behavior is important, based on the reasoning in \cite{azevedo2020b}. Secondly, understanding the value of non-parametric approaches, such as the NPMLE, is important. Although the NPMLE learns priors which are atomic, it would be interesting to investigate to what extent the NPMLE can be used to learn the production function $f(n)$. This avenue is promising, as the NPMLE (e.g. \cite{koenker2014convex}) has several adaptivity properties that make it suitable to learning both light-tailed and heavy-tailed priors \cite{shen2022empirical, soloff2024multivariate, jiang2020general}, although results on estimation rates for $G$ are still currently being researched \cite{soloff2024multivariate}. Notice by Tweedie's formula that the production function may be written as 
\begin{equation}
\label{eq:prod_fcn_f}
f(n) = \int \left(xm_{\hat{\Delta},\frac{\sigma^2}{n}}(x) +  \frac{\sigma^2}{n} m'_{\hat{\Delta},\frac{\sigma^2}{n}}(x) \right)^+  dx,
\end{equation}
where $m_{\hat{\Delta},\sigma^2/n}(x)$ is the marginal density of an observation $\hat{\Delta}_i$ with variance $\sigma^2/n.$ Eq. \eqref{eq:prod_fcn_f} appears to be useful in obtaining error bounds when estimating $G$ by NPMLE, using the results of \cite{jiang2009general, jiang2020general, soloff2024multivariate}, or by other estimates \cite{guo2020empirical}. Deconvolution kernel estimators \cite{fan1991optimal, delaigle2008density, meister2009density} may also be used here, although they require tuning parameters for the heteroskedastic problems in our setting. NPMLE, on the other hand, is tuning-free.

In estimating $G$, we ignore the multi-arm structure of most experiments in practice. This leaves a lot of data on the table. Estimation of the prior $G$ with multi-arm structures can be done with a two-level hierarchical model; techniques such as the NPMLE can be used here, or fully Bayesian models such as \cite{ejdemyr2023estimating} may be used. After estimating $G$, there is the natural question of how to optimize returns from a portfolio of multi-cell experiments. 

% \paragraph*{Incentive Alignment.} Managers of experimentation programs have the capability to design mechanisms to align incentives across data scientists and the platform. Data scientists are rewarded for the number of successful experiments they launch, and the magnitude of the effect. Thus, there are incentives to run many tests with null effect, in the hopes that a few of them will meet the level-$\alpha$ $p$-value threshold by sheer chance. See \cite{bates2022principal} for one solution to this issue using contract theory.
% For ideas which are expected not to be null effects, there is an incentive to run high-powered experiments by increasing sample size. On the other hand, the platform has an optimal allocation of units to ideas which optimizes platform-wide returns to experimentation. How can platforms design mechanisms to incentivize individual data scientists to size tests optimally? 

\paragraph*{Sequential Frameworks.} The A/B Testing problem is a one-shot optimization. In practice, we have multiple time periods to space out tests. Incorporating time into the problem suggests several tradeoffs. For example, there would be a tradeoff between running every idea with high powered experiments, taking a long time, versus testing all ideas at once with very under-powered tests. Here is one model which captures some of aspects of these tradeoffs. Suppose that we have $t = 1,\dots,T$ time periods, $I$ total ideas to test, and $N$ units. During a time period $t$, we may select a number of ideas to test using $N$ units; for a given period, let $I_t$ be the number of ideas tested. The problem is to choose an allocation $(n_{i,t})_{i=1,\dots,I_t}$ to test ideas at time $t$, the idea allocations $I_t$, and the subset $S_t$ of ideas to ship, in order to maximize cumulative return by a fixed time, for any choice of weights $w_t$:

\begin{align*}
    & \text{maximize} \quad \E\left[\sum_{t=1}^{T} w_t \sum_{i \in S_t} \Delta_i \right]\\
    & \text{subject to } \sum_{t=1}^T I_t \leq I.
\end{align*}

The problem may be solved using dynamic programming in a fashion similar to the problems in Section \ref{sec:multiple_programs}. Figure \ref{fig:sequential_dp_solution} shows the solution of the sequential A/B testing problem. Such insights are useful for planning and scheduling exercises around experimentation. For a metric which upweights ideas that are shipped earlier, it is intuitive that more tests are shipped early on. As $T$ increases, one should expect the allocation of tests over time to even out.

\begin{figure}
    \centering
    \includegraphics[width = \linewidth]{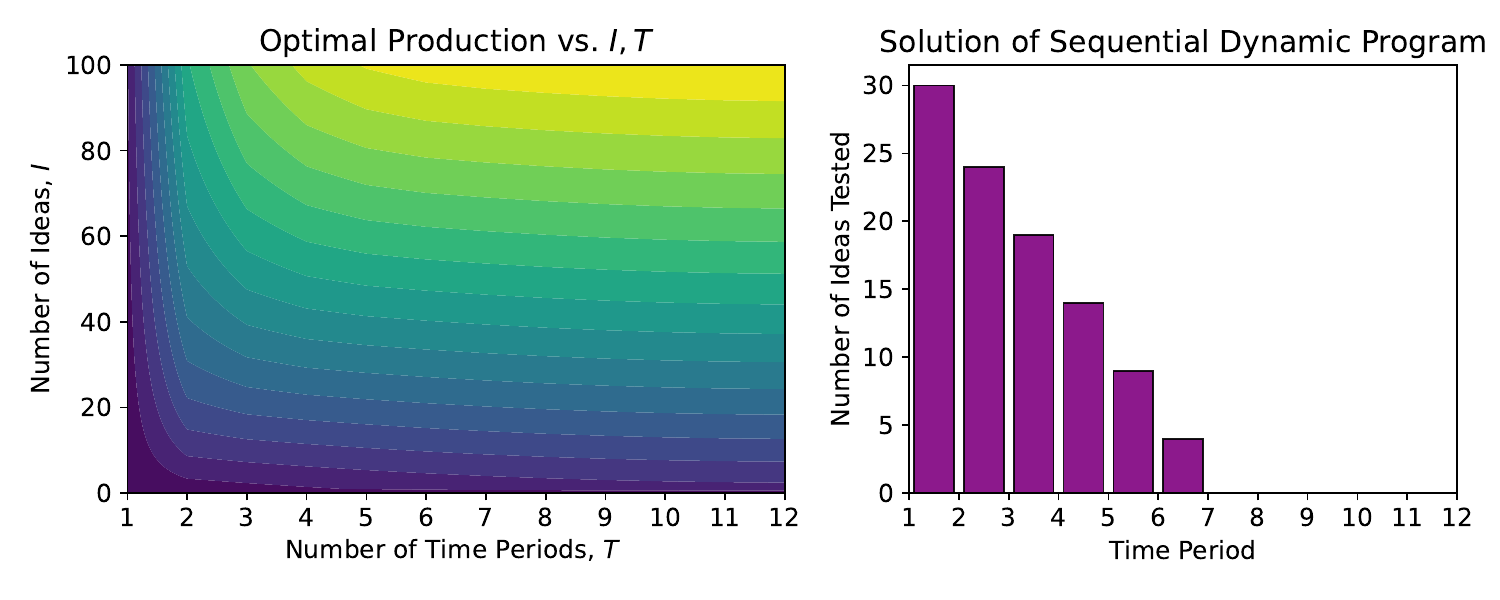}
    \caption{Solution of a sequential dynamic program for a Netflix experimentation program.  Lighter colors indicate larger returns. The metric $\sum_t (T- t) \sum_{i \in S_t} \Delta_i$ was used, to account for additional gains when a treatment is shipped sooner. Top: optimal expected return when testing $I$ ideas over $T$ time periods. Bottom: optimal testing schedule for $I = 100, T = 12.$}
    \label{fig:sequential_dp_solution}
\end{figure}

The sequential version of the A/B testing problem we introduce allows managers to answer scheduling questions. However, it misses an important aspect of A/B test decision making - test lengths are variable, and there is an option of waiting longer to collect more information on a treatment. Sequential decision making for a single decision has been studied by Wald and collaborators \cite{wald1992sequential, wald1950bayes, arrow1949bayes}, and has seen intense resurgence in the context of digital platforms \cite{maharaj2023anytime, ham2022design, johari2022always}. The problem with multi-arm tests is closely related to the multi-armed bandits, best-arm identification, and sequential clinical trials literatures, depending on the loss function specified. See \cite{kasy2021adaptive, adusumilli2021risk, adusumilli2022neyman, adusumilli2022sample} and the literature reviews within. From the perspective of experimentation programs, there is the interesting extension of how to manage a portfolio of A/B tests sequentially. Under the assumption of no cross-allocation, at each time one must choose an allocation of units, and for each test, there is a decision to either ship the idea or continue testing. Deriving optimal strategies to maximize cumulative returns would be of clear interest.

\paragraph*{Data-driven Program Groupings.} Operationally, it is easy to define collections of A/B tests which should naturally be grouped together. But this process is easily criticized - are there other groupings of A/B tests for which the assumption of i.i.d. draws is more realistic?  This is the \textit{relevance problem} of empirical Bayes \cite{efron2012large}, which is also highlighted in the conclusion of \cite{deng2015objective}. Experimental metadata could be used as covariates to inform a ``data-driven" grouping. See \cite{ignatiadis2019covariate} for a related approach.

\paragraph*{Modelling Idea Generation.} In the setup considered so far, the firm has $I$ ideas to test and the generative process for arriving at these ideas is left unmodelled. As hinted earlier, colloquial discussions about the prevalence of conservative launch rules are motivated by seeking to avoid false-discovery ``trajectories". Overall, the cost of false positives on the rate of future innovation appears to be understudied. Is there empirical evidence or simple theoretical models to support this phenomenon? Simple dynamic models of innovation such as \cite{callander2011searching} could be used to study extensions of the A/B testing problem with this issue in mind.

Furthermore, the current work abstracts from the distribution of ideal \emph{quality} when testing an order-of-magnitude larger number of ideas; it may plausibly be the case that the average quality of each additional idea may diminish in certain experimentation programs, which may provide important additional considerations in practice.

\paragraph*{Diminishing Returns and Data Relevance.} Industry wisdom suggests that for an optimized product, most ideas should have small impact. For relatively unoptimized products, many ideas result in big wins. Thus, older data should be less relevant for estimating a prior; thus points the need for research on how best to weight past data under slow distribution shift.

Moreover, if the distribution $G$ is re-estimated in the future, there is a concern that allocation which optimizes returns to experimentation is at odds with the allocation that best allows the manager to estimate $G.$ In the setting with a normal prior, what is the optimal allocation to best estimate $\mu,\sigma^2$? Are a few high-powered experiments better than many low-powered experiments for estimating the prior? Consider allocations which equally split $N$ units across some fixed number of tests $I_0$, and zero on the others. Then the estimates $\widehat{\Delta}_i$ are i.i.d from $\textsf{N}\left(\mu,\tau^2 + \frac{\sigma^2 I_0}{N}\right)$; the MLE estimates for $\mu,\tau^2$ reduce to the sample mean and sample variance, so that
\begin{align}
    \Var(\hat{\mu}_{\sf{MLE}}) & =  \frac{1}{I_0}\left( \tau^2 + \frac{\sigma^2 I_0}{N} \right)  \\
    \Var(\hat{\tau}^2_{\sf{MLE}}) & = \frac{1}{I_0 - 1}\left( \tau^2 + \frac{\sigma^2 I_0}{N} \right)^2.
\end{align}
Thus $\Var(\hat{\mu}_{\sf{MLE}})$ is strictly decreasing in $I_0$, and $\Var(\hat{\tau}^2_{\sf{MLE}})$ is minimized when $I_0 \approx \frac{\tau^2N}{\sigma^2}$. The optimal $I_0$ values for either the standard deviation and variance were typically much larger than the number of tests currently being run at Netflix. As a result, increasing the number of experiments being run optimizes both returns to experimentation and the ability to learn the prior $G$. This computation suggests similar tradeoffs for more interesting estimators, which could be a direction for future research.

% \paragraph*{Research on Experimentation Practice and Innovation.} The cost of false positives on the rate of future innovation appears to be understudied. Is there empirical evidence or simple theoretical models to support this phenomenon? Under simple models, how conservative should one be to mitigate this risk?

% One of our central claims is that an Bayesian decision theoretic framework to optimizing experimentation program returns is preferable over traditional null hypothesis statistical testing tools. One could run a meta-experiment to compare these two operational approaches. Set aside a control group of experimentation programs, which make decisions using hypothesis testing. Compare this with a treatment group of experimentation programs adopting this optimization framework. Meta-experiments have been done in other contexts such as measuring interference \cite{holtz2024reducing}. One can also backtest the lost returns from using traditional hypothesis testing over the return-aware framework, or treat this question using observational causal inference.

\section{Acknowledgements}

We thank Eduardo Azevedo, Travis Brooks, Michael Howes, Michael Lindon, Tomoya Sasaki, and attendees of CODE@MIT 2024 for helpful comments and discussion.

%%%%%%%%%%%%%%%%%%%%%%%%%%%%%%%%%%%%%%%%%%%%%%
%% Single Appendix:                         %%
%%%%%%%%%%%%%%%%%%%%%%%%%%%%%%%%%%%%%%%%%%%%%%
%\begin{appendix}
%\section*{???}%% if no title is needed, leave empty \section*{}.
%\end{appendix}
%%%%%%%%%%%%%%%%%%%%%%%%%%%%%%%%%%%%%%%%%%%%%%
%% Multiple Appendixes:                     %%
%%%%%%%%%%%%%%%%%%%%%%%%%%%%%%%%%%%%%%%%%%%%%%
\begin{appendix}

\section{Related Work on the Analysis of A/B tests and Decision Theory} 
\label{sec:litreview}

There have been several papers on Bayesian / empirical Bayes approaches to managing A/B tests \cite{goldberg2017decision, guo2020empirical, deng2015objective, abadie2023estimating, ejdemyr2023estimating}. The models of test returns closely mimic those of \cite{azevedo2020b, azevedo2023b}, with different objectives and methods. \cite{deng2015objective} compares Bayesian and frequentist approaches to A/B testing and proposes empirical Bayes as a middle ground that borrows strength from both approaches. Expectation maximization is used to estimate a prior from historical data on A/B tests. However, \cite{deng2015objective} focuses on pure hypothesis testing with an empirical Bayes prior. 

Towards the goal of optimizing returns, \cite{feit2019test} give methods for designing A/B tests under a two stage procedure, solving a similar problem of how to maximize profit in an A/B test by optimizing the sample size. This paper moves towards a dynamic approach to the problem, and uses a normal-normal model under an allocation constraint. \cite{goldberg2017decision} gives a decision theoretic approach similar to that advocated in \cite{manski2019treatment} for A/B testing. The authors give a purely Bayesian approach, using a risk function that weights the magnitude of treatment effects. They use hierarchical Bayesian models to fit the posterior given data on past A/B tests at eBay. 

\cite{guo2020empirical} focuses on the problem of estimating the prior $G$ using an interesting approach blends the $F$-modelling and $G$-modelling of empirical Bayes \cite{efron2012large}. For data on past A/B tests with heteroskedastic variances, they show that a heat equation characterizes the marginal density of the data, with variance as a parameter. A spectral approach for fitting maximum likelihood estimate of $G$ via trigonometric polynomials is proposed. The solution can be found efficiently by convex optimization, and consistency is proven. \cite{coey2019improving} gives an alternative approach to modelling $G$ by \textit{experiment splitting}. The population for an A/B test is split into two groups, and the mean of one group is nonparametrically regressed on the other. \cite{coey2019improving} shows that the regression estimates a quantity similar to the posterior expectation. \cite{ignatiadis2023empirical} gives a related approach generalizing \cite{coey2019improving} in empirical Bayes problems with multiple replicates.

The problem of selecting all treatments to ship is somewhat related to problems of choosing the best $k$ treatments and rankings. See \cite{coey2022empirical} for an empirical Bayes approach to a more general problem of picking the top $k$ treatments. \cite{gu2023invidious} solves ranking problems with empirical Bayes methods.

The perspective of optimizing returns to experimentation follows naturally from a line of work which first \textit{estimates} the returns to experimentation \cite{abadie2023estimating, ejdemyr2023estimating}. \cite{abadie2023estimating} introduces an empirical Bayes procedure to measure the returns to evidence-based decision making.\cite{ejdemyr2023estimating} use a hierarchical Bayesian model to provide estimates of cumulative returns to experimentation which correct for winner's curse. 

The goal of optimization and the associated decision theory is related to approaches for going beyond $p < 0.05$ tradition; see \cite{wasserstein2019moving} for a broad discussion comprising Vol. 73 of the \textit{American Statistician}. As part of this volume, \cite{ruberg2019inference} provides a Bayesian approach similar to ours in the context of clinical trials, which is aware of treatment effect magnitude. The volume also includes the previously-cited reference \cite{manski2019treatment}, which concisely advocates the need for decision theory in problems where we care about treatment choice. \cite{manski2019treatment} also connects the problem to the economic literature on statistical treatment rules, which offers frequentist approaches to similar problems of decision making for maximizing welfare. See e.g. \cite{manski2004statistical, stoye2009minimax, stoye2012minimax, hirano2009asymptotics, tetenov2012statistical}.

\bibliographystyle{alpha}
\bibliography{bibliography} 

\section{Proofs}

% \subsection{Section \ref{sec:ab_testing_problems} Proofs }

% \begin{proof}[Proof of Lemma \ref{lemma:fat_tails}]
% Let $\hat{\bDelta}$ denote the vector of observed outcomes $i=1,\dots,I_0.$ Then the Fat Tails paper gives the following argument (skipping a few steps):
% \begin{align*}
%     \mathbb{E}\left[\sum_{i \in S} \Delta_i \right] & = \mathbb{E}\left[\sum_{i=1}^I \mathbf{1}\set{i \in S} \Delta_i \right] \\
%                                                     & = \sum_{i=1}^I \mathbb{E}\left[\mathbb{E}\left[ \Delta_i | \hat{\bDelta} \right]\mathbf{1}\set{i \in S}\right] \\
%                                                     & = \mathbb{E}\left[\sum_{i \in S} \mathbb{E}\left[ \Delta_i | \hat{\bDelta} \right] \right] \\
%                                                     & = \mathbb{E}\left[\sum_{i \in S, i \leq I_0} \mathbb{E}\left[ \Delta_i | \hat{\Delta}_i \right] \right] + \mathbb{E}\left[\sum_{i \in S, i > I_0} \mathbb{E}\left[ \Delta_i \right] \right],
% \end{align*}
% where we used the fact that $\Delta_i$ is independent from $\hat{\Delta}_j$ for $j \neq i$. Finally, the last line is upper bounded by
% \[
% \mathbb{E}\left[\sum_{i \leq I_0} \mathbb{E}\left[ \Delta_i | \hat{\Delta}_i \right]^+ + \sum_{i > I_0} \mathbb{E}\left[ \Delta_i \right]^+ \right], 
% \]
% which coincides with the described optimal strategy, as desired.
% \end{proof}

\subsection{Section \ref{sec:pvalue_framework} Proofs}

\begin{proof}[Proof of Proposition \ref{prop:pvalue_framework_equivalence}]
% Under this decision framework and objective function, rewrite the objective as 
% \begin{align*}
%     \E\left[\sum_{i \in S} u(\Delta_i) \right] & = \E\left[\sum_{i} u(\Delta_i) \mathbf{1}\set{\frac{\hat\Delta_i\sqrt{n_i}}{\sigma} \geq z_i} \right]\\
%     & = \sum_{i} \E\left[ \E[u(\Delta_i) | \hat \Delta_i] \mathbf{1}\set{\frac{\hat\Delta_i\sqrt{n_i}}{\sigma} \geq z_i} \right]
% \end{align*}
Lemma \ref{lemma:monotonicity} shows that the conditional expectation $\E[u(\Delta_i) | \hat \Delta_i]$ is monotone increasing in $\hat\Delta_i$. As a result, there exists a constant $c$ such that 
\[
\hat{\Delta}_i \geq c \Leftrightarrow \E[u(\Delta_i) | \hat \Delta_i] \geq 0.
\]
Because $1 - \Phi(x)$ is decreasing, there exists a threshold $\alpha$ such that 
\[
1 - \Phi\left(\frac{\widehat{\Delta}_i\sqrt{n}}{\sigma} \right) \leq \alpha \Leftrightarrow \E[u(\Delta_i) | \hat \Delta_i] \geq 0
\]
as desired.

% Thus there exists $z_i$ such that 
% \[
% \E \left[\E[u(\Delta_i) | \hat \Delta_i] \mathbf{1}\set{\frac{\hat\Delta_i\sqrt{n_i}}{\sigma} \geq z_i}\right] = \E\left[ \E[u(\Delta_i) | \hat \Delta_i]^+ \right].
% \]
% This coincides with the optimal objective of the original A/B testing problem, so the optimal values and allocations are identical.
\end{proof}

\subsection{Section \ref{sec:metaproduction} Proofs}

\begin{proof}[Proof of Theorem \ref{thm:metaprod_description}]
Reparametrize the optimization problem by the variable $x = N/i$. This yields
\begin{equation}
    F(I,N) = N \max_{x \in \left[\frac{N}{I},N \right]} \frac{f(x)}{x}.
\end{equation}
Next, we claim that the function $g(x) := f(x)/x$ on $\mathbb{R}^+$ is unimodal: it is positive everywhere, increasing on $[0,x^*]$ and decreasing to zero on $[x^*,\infty].$ If this is true, it is clear that Eq. \eqref{eq:metaprod_description} holds. \\

Positivity is clear. To prove the other claim, first note $g$ is smooth and $g'(x) = \frac{xf'(x) - f(x)}{x^2}$. Thus it suffices to show there exists an $x^*$ such that $h(x) := xf'(x) - f(x) \geq 0$ on $[0,x^*]$ and is non-positive on $[x^*,\infty).$ The derivative of $h$ is equal to $xf''(x)$. Recall that Proposition 2 of \cite{azevedo2023b} shows there exists $\hat x$ for which $f(x)$ is convex on $[0,\hat x]$ and concave on $[\hat x, \infty).$ Thus, $h$ is increasing on $[0,\hat x]$ and is decreasing on $(\hat{x},\infty)$. Finally, note the following two properties of $h.$ 
\begin{enumerate}
    \item $h(0) \geq 0$. Interpret $h(0)$ as the limit of the derivative $xf'(x)$ as $x \rightarrow 0^+$. A straightforward calculation shows that this exists; see also Proposition 1 of \cite{azevedo2020b}. Further it is nonnegative as $f'(x) \geq 0$ for all $x \geq 0.$
    \item $\liminf_{x \rightarrow \infty} h(x) < 0$.  since $f(x)$ is increasing and bounded above, $g(x)$ is decreasing for large $x$. Thus, for $x$ large enough, $h(x)$ is negative.
\end{enumerate}
By the intermediate value theorem, there exists such a $x^*$, greater than $\hat{x}$.
\end{proof}

\begin{proof}[Proof of Corollary \ref{cor:metaproduction_marginals}]
For the first part, $F_1(I)$ is increasing by a coupling argument: any testing strategy for $I$ ideas is admissible for $I' > I$ ideas. Now consider the following two cases. If $N \leq x^*$, then the function is constant and equal to $f(N)$. Otherwise Theorem \ref{thm:metaprod_description} implies
\begin{equation}
F_1(I) = 
\begin{cases}
    I \; f\left(\frac{N}{I} \right) & \text{if } I \leq \frac{N}{x^*} \\
    N \frac{f(x^*)}{x^*} & \text{if } I \geq \frac{N}{x^*}.
\end{cases}
\end{equation}
For $I$ larger than $\frac{N}{x^*}$, the function is constant. Otherwise, the function $g(I,N) = If\left( \frac{N}{I}\right)$ is the perspective transform of $f$. Since $I \in [1,\frac{N}{x^*}]$, the argument $\frac{N}{I}$ lies in $[x^*,N].$ The proof of Theorem \ref{thm:metaprod_description} shows that $x^* \geq \hat{x}$, and so $f$ is concave on $[x^*,N]$. Since the perspective function of a concave function is concave, we conclude that $F_1(I)$ is concave. \\

For the second part, recall that $x^*$ is independent of $I,N$. It only depends on the underlying distribution $G.$ Re-expressing the result of Theorem \ref{thm:metaprod_description} gives
\begin{equation}
F_2(N) = 
\begin{cases}
    f(N) & \text{if } N \leq x^* \\
    N\frac{f(x^*)}{x^*} & \text{if } N \in [x^*,Ix^*] \\
    I\; f\left( \frac{N}{I}\right) & \text{if } N \geq Ix^*.
\end{cases}
\end{equation}
Using the fact that $\hat{x} \leq x^*$, the desired properties are now easy to see.
\end{proof}

% \begin{proof}[Proof of \ref{cor:sequential_regret}]
% If $N \leq x^*$ then $F(I_t,N) = f(N)$ for all choices of $I_t \geq 1$. Then the bound simplifies drastically to $I \E[\Delta_i^+] - Tf(N)$. Suppose now $x^* \geq N$. Because of the concavity of $F_1(I)$ for any $N$, the maximum is achieved by $I_t = I/T$ for all $t.$ The difference between the oracle value and the best sequential implementation is given by
% \[
% I \E[\Delta_i^+] - T F\left(\frac{I}{T}, N\right).
% \]
% For $I/T \leq N/x^*$, the result becomes
% \[
% I\left( \E[\Delta_i^+] - f\left( \frac{NT}{I} \right) \right),
% \]
% and for $N \in \left[x^*, \frac{Ix^*}{T}\right], L_2 = I \E[\Delta_i^+] - T N f(x^*)/x^*$. 
% \end{proof}

%\subsection{Section \ref{sec:theoretical_results} Proofs}

\begin{proof}[Proof of Corollary \ref{cor:regret_with_oracle}]

$L_1$ is equal to $cI - F(I,N),$ where $c = \E\left[\Delta_i^+\right]$ is explicit. By Theorem \ref{thm:metaprod_description}, 
\[
L_1 =  
\begin{cases}
    cI - f(N) & \text{if } x^* \geq N \\
    cI - N\frac{f(x^*)}{x^*} & \text{if } x^* \in [\frac{N}{I},N] \\
    I \left( c -  f\left( \frac{N}{I}\right) \right) & \text{if } x^* \leq \frac{N}{I}.
\end{cases}
\]
Since $N \rightarrow \infty$ and $N/I \rightarrow \kappa$, only the second and third cases are relevant. If $\kappa \leq x^*$ then eventually $x^* \in [N/I,N]$, and so $L_1 = cI - N\frac{f(x^*)}{x^*}$. In this regime, $L_1/I \rightarrow c - \kappa \frac{f(x^*)}{x^*}.$ The other case is similar, using continuity of $f$.
\end{proof}

\begin{proof}[Proof of Corollary \ref{cor:metaproduction_vs_mu_tau}]
This leverages Prop. 4 of \cite{azevedo2023b}; the result states that $\frac{\partial f}{\partial \mu} > 0$ if and only if $\mu < 0$, and $\frac{\partial f}{\partial \tau} > 0.$ By Theorem \ref{thm:metaprod_description}, $F(I,N)$ for fixed $I,N$ is some positive multiple of $f$ with argument depending on $I,N$. Hence it is also increasing as $\mu \uparrow 0, \tau \rightarrow \infty.$ 
\end{proof}

\subsection{Section \ref{sec:discussion} Proofs}

\begin{lemma}
\label{lemma:minimax_bounded_below_bybayes}
The minimax risk is bounded below by the Bayes risk of any prior.
\end{lemma}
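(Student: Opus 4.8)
The plan is to prove this via the elementary principle that a supremum always dominates an average against any probability measure, so the worst-case risk of any fixed procedure is bounded below by its Bayes risk under an arbitrary prior.

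First I would fix notation. Write $R(\bDelta, \delta)$ for the risk (expected loss) of a decision rule $\delta$ at parameter value $\bDelta = (\Delta_i)_i$, let the minimax risk be $R^\star := \inf_\delta \sup_{\bDelta} R(\bDelta, \delta)$, and let the Bayes risk of an arbitrary prior $G$ be $r_G := \inf_\delta \E_{\bDelta \sim G}\!\left[ R(\bDelta, \delta) \right]$, where the infima range over all (randomized) decision rules. In the present setting $R$ is the expectation over the data $X_i \sim \textsf{N}(\Delta_i, \sigma^2/n_i)$ of the loss $\sum_i \Delta_i^+ - \sum_{i \in S} \Delta_i$, but the argument uses nothing about the specific form of the loss.

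The single key step is that, for each \emph{fixed} decision rule $\delta$, since $G$ is a probability measure the supremum over the parameter space dominates the $G$-average:
\[
\sup_{\bDelta} R(\bDelta, \delta) \;\geq\; \E_{\bDelta \sim G}\!\left[ R(\bDelta, \delta) \right] \;\geq\; \inf_{\delta'} \E_{\bDelta \sim G}\!\left[ R(\bDelta, \delta') \right] \;=\; r_G .
\]
The first inequality holds because no weighted average can exceed the pointwise supremum, and the second is just the definition of $r_G$ as an infimum over rules. This shows $\sup_{\bDelta} R(\bDelta, \delta) \geq r_G$ \emph{uniformly in} $\delta$, so taking the infimum over $\delta$ on the left-hand side preserves the lower bound, yielding $R^\star \geq r_G$. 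Since $G$ was arbitrary, this is exactly the claim.

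There is no genuine obstacle here — this is a classical inequality in statistical decision theory — but the one point that requires care is the ordering of the infimum and supremum: the bound $r_G$ must be established pointwise in $\delta$ \emph{before} the outer infimum is taken. One should resist the temptation to interchange $\inf_\delta$ and $\sup_{\bDelta}$ directly, as that would invoke a minimax theorem (requiring convexity/compactness hypotheses) and is both unnecessary and stronger than what is needed.
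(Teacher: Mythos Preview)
Your proof is correct and follows essentially the same argument as the paper: for each fixed rule $\delta$, bound the supremum of $R(\cdot,\delta)$ below by its $G$-average, then by the infimum over rules, and finally take $\inf_\delta$ on the left. The paper's proof is identical in structure, only with slightly different notation ($\theta$ in place of $\bDelta$).
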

\begin{proof}
For any decision $\delta$, $\sup_{\theta'} R(\theta',\delta) \geq R(\theta,\delta)$. Taking the expectation over $\theta \in G$ for some prior $G$, we find
\[
\sup_{\theta'} R(\theta',\delta) \geq \E_{\theta}R(\theta,\delta).
\]
Thus $\sup_{\theta'} R(\theta',\delta) \geq \inf_{\delta'}\E_{\theta}R(\theta,\delta'),$ and now taking the inf over the left hand side over $\delta$, we conclude
\[
\inf_\delta \sup_{\theta'} R(\theta',\delta) \geq \inf_{\delta'}\E_{\theta}R(\theta,\delta'),
\]
which shows that the minimax risk is bounded below by the Bayes risk over any prior.
\end{proof}

\begin{proof}[Proof of Prop. \ref{prop:minimax_optimal_multiple_tests}]
Firstly, the maximum risk of the decision rule to ship when $X_i > 0$ is given by
\begin{align*}
& \max_{\mu_i} \sum_i  \mu_i^+ - \mu_i \Phi\left(\frac{\mu_i\sqrt{n_i}}{\sigma}\right) \\
= & \sum_i  \max_{\mu_i > 0} \mu_i \left( 1 - \Phi\left(\frac{\mu_i\sqrt{n_i}}{\sigma}\right) \right).
\end{align*}
Recall from Lemma \ref{lemma:minimax_bounded_below_bybayes} that the minimax risk is bounded below by the Bayes risk for any prior. Consider a prior where $\mu_i$ are all i.i.d. and take values $\pm C_i$ with probability $1/2$. By independence, the Bayes optimal rule is to apply the Bayes rule for a single test individually. The Bayes risk is then given by the sum of the Bayes risks for the single test decision problem. 

For a single test decision problem, the optimal decision rule is to ship if $X_i > 0$: writing out an explicit formula for the posterior mean shows $\E[\mu_i | X_i = 0] = 0$ and the posterior mean is increasing. The Bayes risk of the single procedure is then $\E[\mu_i^+] - \E[\mu_i \mathbf{1}\set{X > 0}]$, which can be computed as  
\[
\frac{C_i}{2}\left(1 - \Prob\left(\sf{N}(0,\sigma^2/n_i) \in [-C_i,C_i]\right)\right)
\]
This simplifies to $C_i\left(1 - \Phi(\frac{C_i\sqrt{n_i}}{\sigma})\right)$. Picking the parameters $C_i$ to maximize the Bayes risk, the resulting quantity is
\[
\sum_{i=1}^n \max_{C_i > 0} C_i \left( 1 - \Phi\left(\frac{C_i\sqrt{n_i}}{\sigma} \right) \right),
\]
 which exactly matches the risk of the posited decision rule. 
\end{proof}

\begin{proof}[Proof of Lemma \ref{lemma:minimax_allocation}]
The expression for the minimax risk is given in the proof of Prop. \ref{prop:minimax_optimal_multiple_tests}. We claim that the function
\[
g(n) = \max_{\mu > 0} \mu \left(1 - \Phi \left(\frac{\mu \sqrt{n}}{\sigma} \right) \right)
\]
is concave decreasing. The decreasing property is simple to see, since the function $f_n(\mu) = \mu \left(1 - \Phi \left(\frac{\mu \sqrt{n}}{\sigma} \right) \right)$ is uniformly smaller as $n$ increases. To see concavity, change variables in the maximization by using $\nu = \mu\sqrt{n}/\sigma$, so that 
\[
g(n) = \frac{\sigma}{\sqrt{n}} \max_{\nu > 0} \nu\left(1 - \Phi(\nu) \right) = C/\sqrt{n},
\]
for some constant $C > 0.$ Combined with the fact that $n_i \geq 1$ (otherwise the minimax risk would be infinite), the problem becomes a convex optimization problem symmetric in the variables, showing that equal allocation reduces the minimax risk.

% Now, inspection of the function $f_n(\mu)$ shows that the $\mu^*$ achieving the optimum satisfies the first order condition 
% \[
% 1 - \Phi\left(\frac{\mu^* \sqrt{n}}{\sigma} \right) = \frac{\mu^* \sqrt{n}}{\sigma}\phi\left(\frac{\mu^* \sqrt{n}}{\sigma} \right).
% \]
% Furthermore, another simple inspection shows that the equation $1 - \Phi(x) = x\phi(x)$ has exactly one solution for $x > 0.$ Thus $\frac{\mu^* \sqrt{n}}{\sigma} = C$ for a fixed constant $C > 0$ independent of $n$. As a result,
% \[
% g(n) = \mu^* \cdot \frac{\mu^* \sqrt{n}}{\sigma} \phi\left(\frac{\mu^* \sqrt{n}}{\sigma}\right) = \frac{\sigma C^2 \phi(C)}{\sqrt{n}},
% \]
% which is a concave function in $n$.
\end{proof}

\subsection{Misc. Computations}

\begin{lemma}[Monotonicity of the Posterior Expectation]
\label{lemma:monotonicity}
Let $F$ be an increasing function. Suppose $\mu \sim G$, with $\int |F(x)| dG(x) <\infty,$ and $X \sim \sf{N}(\mu,\sigma^2)$. Then the conditional expectation $\E[F(\mu) | X = x]$ is increasing in $x$.
\end{lemma}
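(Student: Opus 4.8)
The plan is to reduce the statement to the fact that the Gaussian location family has the monotone likelihood ratio (MLR) property, which passes to the posterior and upgrades to first-order stochastic dominance. Write the posterior expectation as a likelihood-reweighted average against $G$: setting $\ell_x(\mu) := \exp\!\big(-(x-\mu)^2/2\sigma^2\big)$ and $m(x) := \int \ell_x(\mu)\,dG(\mu)$, we have
\[
\E[F(\mu)\mid X=x] = \frac{\int F(\mu)\,\ell_x(\mu)\,dG(\mu)}{m(x)}.
\]
First I would check this is well-defined: $\ell_x>0$ makes $m(x)>0$ for every $x$, and $\ell_x\le 1$ gives $\int |F(\mu)|\,\ell_x\,dG \le \int|F|\,dG<\infty$ by hypothesis, so the posterior law $d\pi_x = \ell_x\,dG/m(x)$ integrates $|F|$ and the conditional expectation is finite everywhere.

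The key computation is the likelihood ratio. For $x_1<x_2$ the Radon--Nikodym derivative of $\pi_{x_2}$ with respect to $\pi_{x_1}$ is $\ell_{x_2}(\mu)/\ell_{x_1}(\mu)$ times the $\mu$-free constant $m(x_1)/m(x_2)$, and
\[
\frac{\ell_{x_2}(\mu)}{\ell_{x_1}(\mu)} = \exp\!\Big(\tfrac{x_1^2-x_2^2}{2\sigma^2}\Big)\exp\!\Big(\tfrac{(x_2-x_1)\mu}{\sigma^2}\Big).
\]
Only the second factor depends on $\mu$, and since $x_2-x_1>0$ it is strictly increasing in $\mu$. Hence $d\pi_{x_2}/d\pi_{x_1}$ is a nondecreasing function of $\mu$: this is precisely the MLR ordering of the posteriors.

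Finally I would invoke the standard implication MLR $\Rightarrow$ first-order stochastic dominance: a nondecreasing density ratio together with $\int d\pi_{x_1}=\int d\pi_{x_2}=1$ forces the signed measure $\pi_{x_2}-\pi_{x_1}$ to change sign exactly once, from negative to positive, which is equivalent to $\pi_{x_2}$ stochastically dominating $\pi_{x_1}$. Since $F$ is increasing, this yields $\E_{\pi_{x_2}}[F]\ge \E_{\pi_{x_1}}[F]$, i.e. $\E[F(\mu)\mid X=x]$ is increasing in $x$, as claimed. I expect the only delicate point to be the single-crossing step for a general prior $G$ carrying no density; working throughout with $dG$ as the base measure makes the ratio a genuine likelihood ratio and the argument goes through verbatim. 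An alternative, entirely equivalent route avoids stochastic dominance altogether: differentiating under the integral gives $\frac{d}{dx}\E[F(\mu)\mid X=x]=\sigma^{-2}\Cov_{\pi_x}\!\big(F(\mu),\mu\big)$, which is nonnegative by the covariance inequality for two nondecreasing functions of the single variable $\mu$; here the only obstacle is justifying the interchange of derivative and integral, which follows from dominated convergence using the finiteness of $\int|F|\,dG$.
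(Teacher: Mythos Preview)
Your proposal is correct. Your primary route via the monotone likelihood ratio of the Gaussian location family and the implication MLR $\Rightarrow$ first-order stochastic dominance is a genuinely different argument from the paper's, and it has the virtue of avoiding any differentiation under the integral sign: once you note that $d\pi_{x_2}/d\pi_{x_1}\propto\exp\!\big((x_2-x_1)\mu/\sigma^2\big)$ is nondecreasing in $\mu$, the conclusion for every increasing $F$ with $\int|F|\,dG<\infty$ follows from standard order-theoretic facts, with no smoothness bookkeeping required. The paper instead takes precisely the route you sketch as your ``alternative'': it writes out the ratio, differentiates in $x$ using dominated convergence (the domination uses that $|(\mu-x)\,\ell_x(\mu)|$ is bounded, not merely $\ell_x$ itself), and recognizes the derivative as a positive multiple of $\Cov_{\pi_x}\!\big(F(\mu),\mu\big)$, which is nonnegative by the Harris--FKG (Chebyshev) covariance inequality for two increasing functions of a single variable. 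Both proofs ultimately rest on the same structural feature---the log-likelihood is linear in $\mu x$---so the difference is packaging: your MLR argument is cleaner and more robust to technicalities about interchanging limits, while the paper's differentiation approach yields an explicit formula for the derivative and delivers strict monotonicity whenever the posterior $\pi_x$ is nondegenerate.
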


\begin{proof}[Proof of Lemma \ref{lemma:monotonicity}]
The conditional expectation can be written out explicitly as 
\[
\frac{\int F(\mu) \phi(\frac{\mu - x}{\sigma}) dG(\mu)}{\int \phi(\frac{\mu - x}{\sigma}) dG(\mu)}
\]
By the dominated convergence theorem, using the boundedness of the Gaussian density, the quantity is differentiable in $x$. The derivative is a positive constant times
\begin{align*}
& \int \phi\left(\frac{\mu-x}{\sigma}\right) dG(\mu) \int F(\mu)\left(\mu-x \right)\phi\left(\frac{\mu-x}{\sigma}\right) dG(\mu) \\
& - \int F(\mu) \phi\left(\frac{\mu-x}{\sigma}\right) dG(\mu) \int \left( \mu-x\right)\phi\left(\frac{\mu-x}{\sigma}\right) dG(\mu).
\end{align*}
Renormalizing by another positive constant, this is equal to 
\begin{align*}
& \E[F(\mu)(\mu-X)|X] - \E[\mu-X|X]\E[F(\mu) | X].
\end{align*}
Since $F(\mu)$ and $\mu - x$ are increasing in $\mu$, the Harris-FKG inequality shows that this expression is positive. Thus, the derivative is positive.
\end{proof}

% \begin{lemma}
% \label{lemma:truncated_exp}
% Let $X \sim \mathsf{N}(\mu,\sigma^2)$. Then
% \[
% \E[X\mathbf{1}\set{X \geq c}] = \frac{\sigma}{\sqrt{2\pi}}\exp\left(-\frac{(\mu - c)^2}{2\sigma^2}\right) + \mu\left(1 - \Phi\left(\frac{c-\mu}{\sigma}\right) \right)
% \]
% \end{lemma}
% \begin{proof}[Proof of Lemma \ref{lemma:truncated_exp}]
% Rewrite the expression as 
% \begin{align*}
%    \E[X\mathbf{1}\set{X \geq c}] & = \E[(X-\mu)\mathbf{1}\set{X - \mu \geq c - \mu}] + \mu \Prob(X \geq c) \\
%    & = \int_{c-\mu}^\infty x\frac{1}{\sqrt{2\pi\sigma^2}}\exp\left(-\frac{x^2}{2\sigma^2}\right) dx + \mu \left(1 -\Phi\left(\frac{c - \mu}{\sigma} \right) \right) \\
%    & = -\frac{\sigma}{\sqrt{2\pi}}\exp\left(-\frac{x^2}{2\sigma^2}\right)\bigg\rvert_{c-\mu}^\infty + \mu \left(1 -\Phi\left(\frac{c - \mu}{\sigma} \right) \right) \\
%    & = \frac{\sigma}{\sqrt{2\pi}}\exp\left(-\frac{(c-\mu)^2}{2\sigma^2}\right) + \mu \left(1 -\Phi\left(\frac{c - \mu}{\sigma} \right) \right)
% \end{align*}
% \end{proof}

\end{appendix}

\end{document}